\pgfplotsset{width=7.6cm,compat=1.9} 
\theoremstyle{plain}
\newtheorem{theorem}{Theorem}
\newtheorem{lemma}{Lemma}
\newtheorem{corollary}{Corollary}
\newtheorem{definition}{Definition}
\newcommand{\pname}[0]{\textsc{BuildQuadTree}}
\title{Self-stabilizing Overlays for high-dimensional Monotonic Searchability\footnote{
This work is partially supported by the German Research Foundation (DFG) within the Collaborative Research Center ``On-The-Fly Computing'' (SFB 901).
}}
\author{
  Michael Feldmann\footnote{Paderborn University, Germany. \texttt{\{michael.feldmann, ckolb, scheideler\}@upb.de}}
  \and
  Christina Kolb\footnotemark[2]
  \and
  Christian Scheideler\footnotemark[2]
}
\begin{document}

\maketitle

\begin{abstract}
	We extend the concept of monotonic searchability~\cite{DBLP:conf/opodis/ScheidelerSS15}~\cite{DBLP:conf/wdag/ScheidelerSS16} for self-stabilizing systems from one to multiple dimensions.
	A system is self-stabilizing if it can recover to a legitimate state from any initial illegal state.
	These kind of systems are most often used in distributed applications.
	Monotonic searchability provides guarantees when searching for nodes while the recovery process is going on.
	More precisely, if a search request started at some node $u$ succeeds in reaching its destination $v$, then all future search requests from $u$ to $v$ succeed as well.
	Although there already exists a self-stabilizing protocol for a two-dimensional topology~\cite{DBLP:journals/tcs/JacobRSS12} and an universal approach for monotonic searchability~\cite{DBLP:conf/wdag/ScheidelerSS16}, it is not clear how both of these concepts fit together effectively.
	The latter concept even comes with some restrictive assumptions on messages, which is not the case for our protocol.
	We propose a simple novel protocol for a self-stabilizing two-dimensional quadtree that satisfies monotonic searchability.
	Our protocol can easily be extended to higher dimensions and offers routing in $\mathcal O(\log n)$ hops for any search request.
\end{abstract}

\section{Introduction}
Due to the growth and relevance of the Internet, the importance of distributed systems is increasing.
Such systems are needed, for instance, in social media networks or multiplayer games and have to support a large number of participants.
However, as soon as such a system has become large, the occurrence of changes or faults are not an exception but the rule.
In order to recover from an arbitrary state to a legitimate one, distributed protocols are needed that are \emph{self-stabilizing}.

Most of the proposed self-stabilizing protocols only show that the system \emph{eventually} converges to a legitimate state, without considering the \emph{monotonicity} of the actual recovery process.
Monotonicity means that the functionality of the system regarding a specific property never gets worse as time progresses, i.e., for two points in time $t, t'$ with $t < t'$, the functionality of the system is better in $t'$ than in $t$.

In this paper we are interested in \emph{searching}, as this is one of the most important operations in a distributed system.
We study systems that satisfy \emph{monotonic searchability}: If a search request for node $w$ starting at node $v$ succeeds at time $t$, then every search request for $w$ initiated by $v$ at time $t'>t$ succeeds as well.

Previous work on monotonic searchability~\cite{DBLP:conf/opodis/ScheidelerSS15}~\cite{DBLP:conf/wdag/ScheidelerSS16} proposed self-stabilizing protocols for one-dimensional topologies (for instance a sorted list).
Still, up to this point it is not known how to come up with an efficient self-stabilizing protocol for high-dimensional settings that satisfies monotonic searchability.
High-dimensional settings are relevant for example in wireless ad-hoc networks or social networks where processes are defined by multiple parameters.

This paper introduces a novel protocol \pname{} for a self-stabilizing quadtree along with a routing protocol \textsc{SearchQuad} that satisfies monotonic searchability and terminates after $\mathcal O(\log n)$ hops on any input.
To the best of our knowledge, this is the first protocol that combines self-stabilization and monotonic searchability for the two-dimensional case.
In addition, one can easily extend our protocols in order to work for multiple dimensions.
For the two-dimensional case, we expand the notion of monotonic searchability to an even stronger and more realistic property, which we call \emph{geographic monotonic searchability} and show that \textsc{SearchQuad} satisfies this property as well.
Our protocols stand out due to their simplicity and elegance and do not enforce restrictive assumptions on messages, as it has been done for the universal approach~\cite{DBLP:conf/wdag/ScheidelerSS16}.

\subsection{Model}
We consider a two-dimensional square $P$ of unit side length and model the distributed system as a directed graph $G=(V,E)$ with $n$ nodes. 
Each node $v \in V$ represents a single peer and can be identified via its \emph{unique position} in $P$ given by \emph{coordinates} $(v_x,v_y) \in [0,1]^2$.
We define $\vert \vert (u,v)\vert \vert$ as the Euclidean distance between two nodes $u,v \in V$, i.e., $\vert \vert (u,v)\vert \vert = \sqrt{(u_x - v_x)^2 + (u_y - v_y)^2}$.
Additionally, each node $v$ maintains local protocol-based variables and has a \textit{channel} $v.Ch$, which is a system-based variable that contains incoming messages. 
We assume a channel to be able to store any finite number of messages.
Messages are never duplicated or get lost in the channel.
If a node $u$ knows the coordinates of some other node $v$, then $u$ can send a message $m$ to $v$ by putting $m$ into $v.Ch$.
There is a directed edge $(u,v) \in E$ whenever $u$ stores $(v_x,v_y)$ in its local memory or when there is a message in $u.Ch$ carrying $(v_x,v_y)$. In the former case, we call that edge \textit{explicit} and in the latter case we call that edge \textit{implicit}.

Nodes may execute \textit{actions}: An action is a standard procedure and has the form $\langle label \rangle(\langle parameters \rangle):\langle command \rangle$, where $label$ is the name of that action, $parameters$ defines the set of parameters and $command$ defines the statements that are executed when calling that action.
It may be called locally or remotely, i.e., every message that is sent to a node has the form $\langle label \rangle(\langle parameters \rangle)$.
An action in a process $v$ is \emph{enabled} if there is a request for calling it in $v.Ch$. 
Once the request is processed, it is removed from $v.Ch$.
There is a special action called \textsc{Timeout} that is not triggered via messages but is executed periodically by each node.

We define the \textit{system state} to be an assignment of a value to every node's variables and messages to each channel.
A \textit{computation} is an infinite sequence of system states, where the state $s_{i+1}$ can be reached from its previous state $s_i$ by executing an action in $s_i$.
We call the first state of a given computation the \textit{initial state}.
We assume \textit{fair message receipt}, meaning that every message of the form $\langle label \rangle(\langle parameters \rangle)$ that is contained in some channel, is eventually processed.
We place no bounds on message propagation delay or relative node execution speed, i.e.,  we allow fully asynchronous computations and non-FIFO message delivery.
Our protocol does not manipulate node coordinates and thus only operates on them in \emph{compare-store-send} mode, i.e., we are only allowed to compare node coordinates to each other, store them in a node's local memory or send them in a message.

We assume for simplicity that there are no \emph{corrupted coordinates} in the initial state of the system, i.e., coordinates of unavailable nodes.
One could use failure detectors to solve this, but this is not within the scope of this paper, since without them the problem of guaranteeing monotonic searchability is still non-trivial.
Having node coordinates to be read-only also makes sense in our setting, as these are usually delivered by an external component that is not in control of our protocol, for instance like GPS.
In initial states there may exist \emph{corrupted messages} in node channels, i.e., messages containing false information.
We will argue that at a certain point in time, all of these messages will be processed and no more corrupted messages are in the system.

Nodes are able to issue search requests at any point in time: A search request is a message \Call{Search}{$v$, $(x,y)$}, where $v$ is the sender of the message and $(x,y) \in [0,1]^2$ are the coordinates we want to search for.
A search request is delegated along edges in $G$ according to a given routing protocol, until the request \emph{terminates}, i.e., either the node with coordinates $(x,y)$ is reached or the request cannot be forwarded anymore.
Note that $(x,y)$ do not necessarily need to be coordinates of an existing node, i.e., in such a case the routing protocol may just stop at some node that is close to $(x,y)$.
Upon termination at node $w$, the reference of $w$ is returned to the sender $v$ (in the pseudocode we indicate this via a return statement).

\subsection{Problem Statement} \label{sec:problem_statement}
In this paper we consider the standard definition for self-stabilization:

\begin{definition}[Self-stabilization] \label{def:self_stabilization}
A protocol is \emph{self-stabilizing} w.r.t. a set of legitimate states if it satisfies the following two properties:
\begin{enumerate}
	\item Convergence: Starting from an arbitrary system state, the protocol is guaranteed to arrive at a legitimate state.
	\item Closure: Starting from a legitimate state, the protocol remains in legitimate states thereafter.
\end{enumerate}
\end{definition}

We are interested in \emph{topological self-stabilization} in this paper, meaning that our self-stabilizing protocol is allowed to perform changes to the overlay network $G$.
In order for our protocol to work, we require the directed graph $G$ containing all explicit and implicit edges to be at least weakly connected initially.
Once there are multiple weakly connected components in $G$, these components cannot be connected to each other anymore as it has been shown in~\cite{DBLP:journals/tcs/NorNS13} for compare-store-send protocols.
For a graph that contains multiple weakly connected components, our protocol converts each of these components to our desired topology.

Consider the following definition of (standard) monotonic searchability:

\begin{definition}[Monotonic Searchability]\label{def:monotonic_searchability:classical}
	A self-stabilizing protocol satisfies \emph{monotonic searchability} according to some routing protocol $\mathcal R$ if it holds for any pair of nodes $v,w$ that once a search request $\textsc{Search}(v, (w_x,w_y))$ returns $w$ at time $t$, any search request $\textsc{Search}(v, (w_x,w_y))$ initiated at at time $t' > t$ also returns $w$.
\end{definition}

Realizing monotonic searchability in self-stabilizing systems is a non-trivial problem, because once a $\textsc{Search}(v, (w_x,w_y))$ request returns $w$ to $v$, it cannot trivially be guaranteed that $w$ is found again by $v$ at later stages, due to the modification of edges by the self-stabilizing protocol.

The above definition differs in a minor detail compared to the definition stated in~\cite{DBLP:conf/opodis/ScheidelerSS15}~\cite{DBLP:conf/wdag/ScheidelerSS16}: The initial search request issued by $v$ terminates at time $t$, but Scheideler et. al. define the time step $t$ to be the one at which the initial search request was generated by $v$.
They use a probing approach to check for a node $v$ whether $v$ is still waiting for the result of a previously issued search request and cache all search requests searching for the same target.
The same approach can be applied to our protocol as well to overcome this, but for the sake of simplicitiy we use the slightly modified definition stated above.

In two-dimensional scenarios it is more realistic to search for geographic positions rather than for concrete node addresses. 
To handle this, we introduce the following definition of geographic monotonic searchability. 

\begin{definition}[Geographic Monotonic Searchability]\label{def:monotonic_searchability:quad}
	Let $(x,y) \in [0,1]^2$ be an arbitrary position in $P$.
	Let $w \in V$ be the node that would be returned by $\textsc{Search}(v, (x,y))$ if the system is in a legitimate state.
	A self-stabilizing protocol satisfies \emph{geographic monotonic searchability} according to some routing protocol $\mathcal R$ if in case the system is in an arbitrary state and $\textsc{Search}(v, (x,y))$ returns $w$ at time $t$, then any request $\textsc{Search}(v, (x,y))$ initiated at time $t' > t$ also returns $w$.
\end{definition}

This definition is even stronger than (standard) monotonic searchability, i.e., a protocol satisfying geographic monotonic searchability also satisfies monotonic searchability.
Therefore we focus on geographic monotonic searchability for the rest of this paper.

We aim to solve the following problem: Given a weakly connected graph of $n$ nodes with coordinates in $P$, construct a self-stabilizing protocol along with a routing protocol such that geographic monotonic searchability is satisfied.

\subsection{Our Contribution}
In the following we summarize our contributions:

\begin{itemize}
	\item[(1)] We propose a novel self-stabilizing protocol \pname{} that arranges the nodes in a quadtree.
	\pname{} is based on a special kind of subdivision of $P$ into subareas inducing an ordering via a space-filling curve (see Section~\ref{sec:topology}) and the \textsc{BuildList} protocol (Section~\ref{sec:build_quad_tree:list}).
	To the best of our knowledge this is the first self-stabilizing protocol for the quadtree structure.
	\item[(2)] Along with the self-stabilizing protocol \pname{} we propose the routing protocol $\textsc{SearchQuad}$.
	When searching for coordinates $(x,y)$, the protocol returns the node $w$, which lies within the same subarea as $(x,y)$. 
	We show that \pname{} along with $\textsc{SearchQuad}$ satisfies geographic monotonic searchability (and thus also standard monotonic searchability).
	\item[(3)] We get an upper bound of $\mathcal O(\log n)$ on the number of hops for a search message (i.e, the amount of times a search message is delegated until it terminates) if we assume that the Euclidean distance $\vert\vert (u,v)\vert\vert$ between any pair of nodes $(u,v)\in V$ is at least $1/n$.
	This is particularly an improvement on the protocols proposed in~\cite{DBLP:conf/opodis/ScheidelerSS15}~\cite{DBLP:conf/wdag/ScheidelerSS16} regarding the maximum number of hops for searching a target, even for target addresses that do not exist (see Section~\ref{sec:related_work} on related work).
	\item[(4)] Finally, one can easily extend \pname{} and $\textsc{SearchQuad}$ to work in high-dimensional settings, realizing the first self-stabilizing protocol for octtrees - the high-dimensional equivalent of quadtrees - that even satisfies geographic monotonic searchability.
	This makes our protocols highly versatile.
\end{itemize}

The rest of the paper is structured as follows: After stating some related work, we describe our topology for the quadtree in Section~\ref{sec:topology}.
Then we present our novel protocol \pname{} in Section~\ref{sec:build_quad_tree} along with the routing protocol \textsc{SearchQuad}.
Before we conclude in Section~\ref{sec:conclusion}, we analyze our protocols in Section~\ref{sec:analysis} and discuss how to extend \pname{} and $\textsc{SearchQuad}$ to work in high-dimensional settings in Section~\ref{appendix:multidim}.

\subsection{Related Work} \label{sec:related_work}
Quadtrees have first been introduced in 1974 by R. A. Finkel and J.L. Bentley~\cite{DBLP:journals/acta/FinkelB74}.
Since then quadtrees and octrees are most often used in computational geometry~(for surveys consider for example~\cite{DBLP:reference/crc/Aluru04}~\cite{DBLP:conf/ssd/Samet89}).
There are also peer-to-peer approaches relying on quadtrees~\cite{DBLP:conf/ipsn/GaoGHZ04}~\cite{DBLP:journals/vldb/TaninHS07}.
Still, the problem of designing a self-stabilizing protocol that arranges peers in a quadtree is untouched until today.

The concept of self-stabilization has first been introduced by E. W. Dijkstra in 1974 via a self-stabilizing token-based ring~\cite{DBLP:journals/cacm/Dijkstra74}.
This led to the introduction of various other self-stabilizing protocols for network topologies such as sorted lists~\cite{DBLP:conf/alenex/OnusRS07}~\cite{DBLP:journals/mst/GallJRSST14}, De Bruijn graphs~\cite{DBLP:conf/sss/RichaSS11}, Chord graphs~\cite{DBLP:journals/mst/KniesburgesKS14}, Skip graphs~\cite{DBLP:journals/tcs/ClouserNS12}~\cite{DBLP:conf/podc/JacobRSST09} and many more.
A universal approach that is able to derive self-stabilizing protocols for several types of topologies was introduced in~\cite{DBLP:journals/tcs/BernsGP13}.
Interestingly, topological self-stabilization in two- or high-dimensional settings is barely investigated until now: There exists only a single self-stabilizing protocol that transforms any weakly connected graph into a two-dimensional topology - the Delaunay graph~\cite{DBLP:journals/tcs/JacobRSS12}.
Unfortunately, it seems non-trivial to extend this such that monotonic searchability is satisfied, without resorting to expensive mechanisms like broadcasting.
Also, one cannot guarantee searching in $\mathcal O(\log n)$ hops in the Delaunay graph, as its diameter is too large.

Research on monotonic searchability was initiated in~\cite{DBLP:conf/opodis/ScheidelerSS15}, where the authors presented a self-stabilizing protocol for the sorted list that satisfies monotonic searchability.
They also showed that providing monotonic searchability is impossible in general when the system contains corrupted messages.
However, this property is restricted to cases where the desired topology to which the graph should converge is clearly defined, forcing the underlying protocol to eventually remove an explicit edge if it is not part of the desired topology.
This is not the case for our topology, because once a specific explicit edge (which we define as \emph{quad edge} later on) is generated by our protocol it is never deleted, so the legitimate state $s$ that we reach is dependent on the specific computation done before reaching $s$.
Therefore we do not need to enforce any restrictions on messages, as routing is done via quad edges only.
Building on that research, the same authors presented a universal approach for maintaining monotonic searchability at DISC 2016 along with a generic routing protocol that can be applied to a wide range of topologies~\cite{DBLP:conf/wdag/ScheidelerSS16}.
However, adapting their protocol to specific topologies comes at the cost of convergence times and additional message overhead.
Furthermore, search request forwarded via their generic routing protocol might travel $\Omega(n)$ hops when searching for non-existing nodes, whereas our routing protocol only needs $\mathcal O(\log n)$ hops on any input to terminate, while still satisfying monotonic searchability.
In addition to this, our protocol \pname{} is simpler and also more lightweight regarding the message overhead.
This is mostly due to the simplicity of the quadtree topology.

Closest but different from our notion of monotonic searchability is the notion of \emph{monotonic stabilization}~\cite{DBLP:conf/opodis/YamauchiT10}.
A self-stabilizing protocol is monotonically stabilizing, if every change done by its nodes is making the system approach a legitimate state and if every node changes its output only once.
The authors show that processes have to exchange additional information in order to satisfy monotonic stabilization.

For the computation of an ordering, we use a space-filling curve similar to the Morton-curve~\cite{morton1966computer}, as it matches the structure of the quadtree best.
Other curves like the Hilbert-curve would also work in principle, however, using them would make the presentation of our ideas way more harder.

\section{Topology and Legitimate State} \label{sec:topology}
In this section we introduce our desired topology for the quadtree and define what it means for our system to be in a legitimate state.
We first provide some intuition: Given a set $V$ of $n$ nodes with coordinates in $P$, we first cut the area $P$ into two equally sized \emph{subareas}, via a vertical cut.
This is done recursively for each subarea, alternating between vertical and horizontal cuts, as long as the subarea contains more than one node.
Once this is done, we can define a total order on all nodes in $P$, that is used to connect the nodes into a (doubly-linked) sorted list.
Based on this list and the generated subareas, we establish further edges, which we use for the routing protocol.

More formally, let us consider the recursive algorithm \textsc{QuadDivision} (see Algorithm~\ref{algo:quad_division} for pseudocode) having a set of nodes, a (sub-)area and a flag indicating the next cut (vertical or horizontal) as input.

\begin{algorithm}[ht]
\caption{Quad Division Algorithm}
\label{algo:quad_division}
\begin{algorithmic}[1]
\Procedure{QuadDivision}{$V$, $P$, $cut$}
	\If{$cut = 1$}
		\State Perform vertical cut on $P$, resulting in $P = P_1 \cup P_2$
	\Else
		\State Perform horizontal cut on $P$, resulting in $P = P_1 \cup P_2$
	\EndIf
	\State $S \gets \emptyset$
	\If{$P_1$ contains at most one node out of $V$}
		\State $S \gets S \cup \{P_1\}$
	\Else
		\State $S \gets S\ \cup$ \Call{QuadDivision}{$\{v \in V\ |\ v \in P_1\}$, $P_1$, $|cut-1|$}
	\EndIf
	\If{$P_2$ contains at most one node out of $V$}
		\State $S \gets S \cup \{P_2\}$
	\Else
		\State $S \gets S\ \cup$ \Call{QuadDivision}{$\{v \in V\ |\ v \in P_2\}$, $P_2$, $|cut-1|$}
	\EndIf
	\State \Return $S$
\EndProcedure
\end{algorithmic}
\end{algorithm}

Initially we call \Call{QuadDivision}{$V$, $P$, $1$} and thus perform a vertical cut on $P$, dividing it into equally sized subareas $P_1$ and $P_2$.
Then we call \textsc{QuadDivision} recursively on $P_1$ and $P_2$ as long as they contain more than one node.
We say a subarea $A$ \emph{contains} node $v$ (or conversely, node $v$ is contained in the subarea $A$), denoted by $v \in A$, if $v$'s coordinates $(v_x, v_y)$ lie within $A$.
If a subarea $A$ contains no node from $V$, we say that $A$ is \emph{empty}.
For simplicity, we assume that nodes do not lie on the boundaries of subareas, as this would disturb the presentation of our algorithm, but the problem can easily be resolved in practice.
\Call{QuadDivision}{$V$, $P$, $1$} returns the set $S$ of subareas that contain at most one node.
Figure~\ref{fig:cutting_example} shows an example for a sequence of cuts with $4$ nodes $v_1, \ldots, v_4$.
Note that upon termination, \textsc{QuadDivision} returns $5$ subareas (one subarea for each node $v_i$ and the empty subarea on the bottom left).

\begin{figure}[ht]
 	\centering
 	\includegraphics[width=\textwidth]{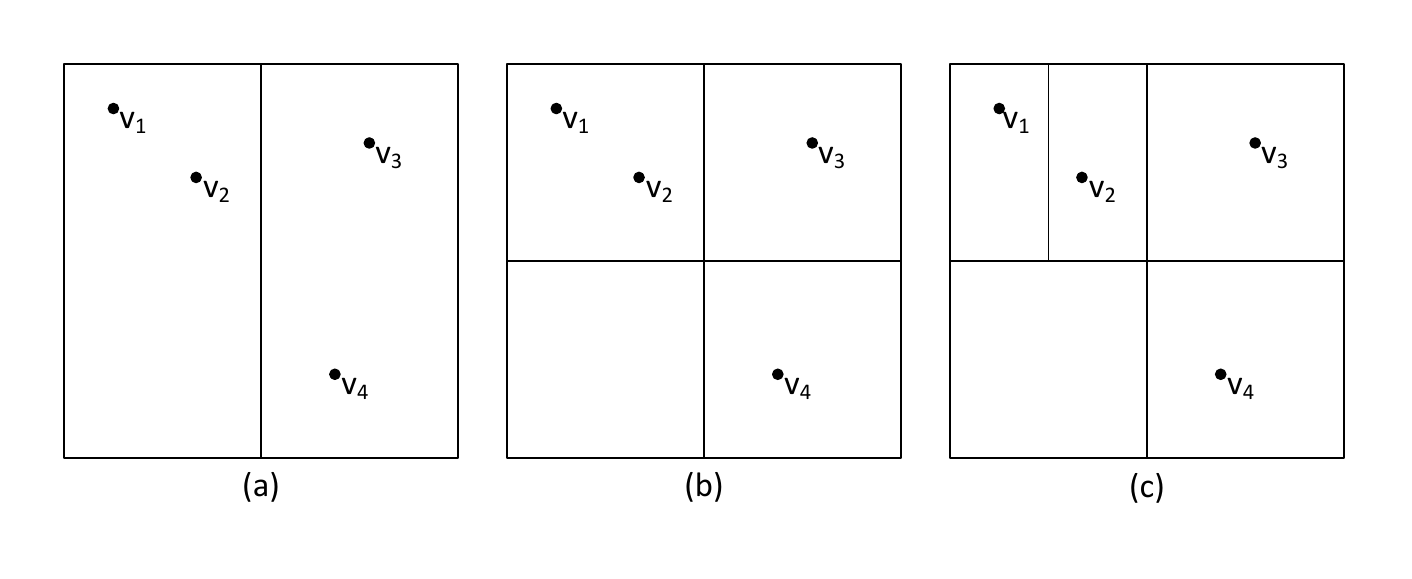}
 	\caption{Illustration of \textsc{QuadDivision} performed on nodes $v_1,\ldots,v_4$. (a) illustrates the first vertical cut on $P$. (b) illustrates the horizontal cuts done to subareas $P_1$ and $P_2$. (c) illustrates the final vertical cut before termination.}
 	\label{fig:cutting_example}
\end{figure}

In the following we want to view the output of \textsc{QuadDivision} as a binary tree $T$: The root node corresponds to the whole square $P$.
An inner node of $T$ corresponding to a (sub-)area $P$ has two child nodes: Cutting $P$ into two subareas $P_1$ and $P_2$, the \emph{left child} represents the subarea that lies west of the other (when performing a vertical cut on $P$) or north of the other (when performing a horizontal cut on $P$).
Similarly, the \emph{right child} represents the subarea that lies east of the other (when performing a vertical cut on $P$) or south of the other (when performing a horizontal cut on $P$).
The binary tree is the unique minimal such tree having no leaf node $t \in T$ correspond to a subarea of $P$ that contains more than one node $v \in V$.
Note that this makes nodes $v \in V$ correspond to leaf nodes in $T$, but a leaf node $t \in T$ does not necessarily correspond to a node in $V$, as the subarea represented by $t$ may be empty.
Figure~\ref{fig:tree_example} shows the corresponding binary tree $T$ to the previous example from Figure~\ref{fig:cutting_example}.

\begin{figure}[ht]
 	\centering
 	\includegraphics[width=\textwidth]{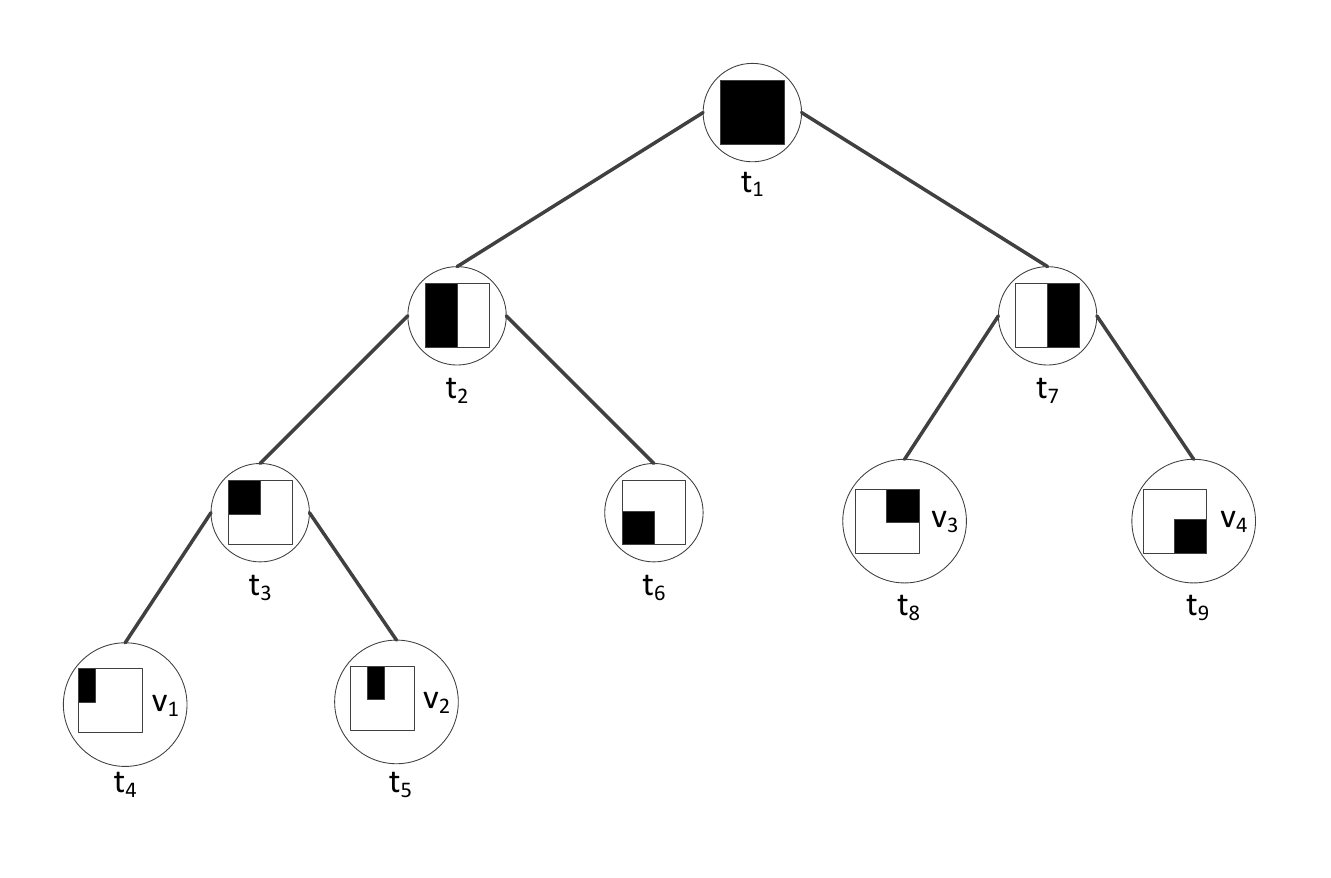}
 	\caption{Corresponding binary tree to the previous example from Figure~\ref{fig:cutting_example}. The subareas marked in black are the subareas that are represented by the corresponding tree node.
 	Performing a depth-first search on the tree, when always going to the left child first, yields the total order $v_1 \prec v_2 \prec v_3 \prec v_4$.}
 	\label{fig:tree_example}
\end{figure}

Using the binary tree notation, we can define a total order on $V$:

\begin{definition}[Two-Dimensional Ordering] \label{def:total_order}
	Let $T$ the be tree corresponding to the subareas that are returned by \textsc{QuadDivision}$(V$, $P$, $1)$.
	The total order $\prec$ is given by the depth-first search (DFS) traversal of $T$, always going to the left child first.
\end{definition}

When comparing nodes $v$ and $w$ via $\prec$ we say that $v$ is \emph{left} of $w$, if $v \prec w$, otherwise $v$ is \emph{right} of $w$ (note that either of the two cases always holds as we assume node coordinates to be unique).
In addition we say that $v$ is $w$'s \emph{closest left neighbor} if $v \prec w$ and there is no node $u$ with $v \prec u \prec w$.
Analogously we define a node $v$ being the \emph{closest right neighbor} of $w$.

As nodes in the binary tree $T$ correspond to subareas of $P$ and vice versa, we use them interchangeably for the rest of the paper.
We say that a node $t \in T$ \emph{represents} a subarea $A$, if $A$ is the corresponding subarea to $t$.
The next definition introduces important notation in order to define the legitimate state:

\begin{definition} \label{def:target_subareas}
	Let $T$ be the tree representing the subareas that are returned by a \textsc{Quad-Division}$(V$, $P$, $1)$ call.
	For a node $v \in V$, denote the leaf node representing the subarea that contains $v$ by $A(v)$.
	Define the set $Q(v)$ as the set of subareas represented by nodes $t \in T$ such that the following holds: 
	\begin{itemize}
		\item[(a)] If $t \in Q(v)$, then the subarea represented by $t$ does not contain $v$.
		\item[(b)] If $t \in Q(v)$, then the subarea represented by the parent node of $t$ contains $v$.
		\item[(c)] Combining all subareas in $Q(v)$ with $A(v)$ yields the whole square $P$.
	\end{itemize} 
\end{definition}

As an example consider again Figure~\ref{fig:tree_example}: The set $Q(v_1)$ consists of the subareas $t_5, t_6$ and $t_7$, as the combination of these with the subarea $t_4$ containing $v_1$ yield the square $P$.
Note that for instance $t_8 \in Q(v_1)$ would violate condition (b).

Using the total order $\prec$ we are now ready to define the legitimate state of our system, i.e., the topology that should be reached by our self-stabilizing protocol:

\begin{definition}[Legitimate State] \label{def:legal_state}
	The system is in a legitimate state if the graph induced by the explicit edges satisfies the following conditions:
	\begin{itemize}
		\item[(a)] Each node $v$ is connected to its closest left and right neighbor w.r.t. $\prec$.
		\item[(b)] For each non-empty subarea $A \in Q(v)$, $v$ is connected to exactly one node $w \in A$.
	\end{itemize}
\end{definition}

Consider Figure~\ref{fig:legal_state_example} showing a possible legitimate state for the nodes from Figure~\ref{fig:cutting_example}.

\begin{figure}[ht]
 	\centering
 	\includegraphics[scale=1]{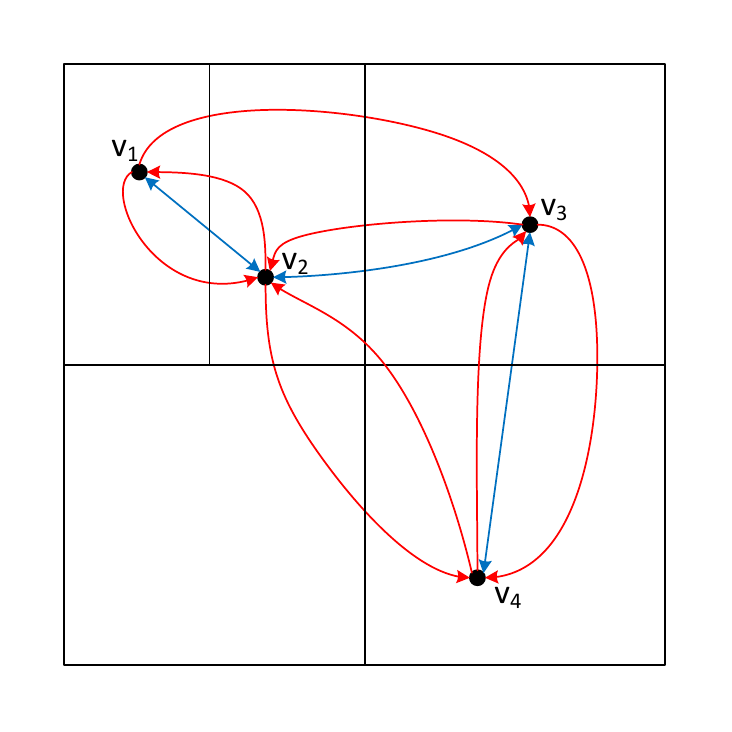}
 	\caption{A possible legitimate state for the system from Figure~\ref{fig:cutting_example}. List edges are indicated in blue, quad edges in red.}
 	\label{fig:legal_state_example}
\end{figure}

Note that we do not clearly define nodes for $v$ to connect to in condition (b) more specifically, we just want to make sure that $v$ is able to reach the subarea directly via an outgoing edge in case the subarea contains nodes.
As it turns out, this helps us in order to achieve geometric monotonic searchability.
We want to emphasize that edges in $T$ are not part of the legitimate state, as we use the binary tree to illustrate our approach and only let nodes compute necessary parts of the tree locally.

\section{Protocol Description} \label{sec:build_quad_tree}
In this section we describe the self-stabilizing \pname{} protocol and the routing algorithm \textsc{SearchQuad}.
We first define the protocol-based variables for each node.
We denote by $\perp$ that the variable does not contain any node.
Each node $v \in V$ maintains the following variables:

\begin{itemize}
	\item Variables $v.left, v.right \in V \cup \{\perp\}$ storing $v$'s left and right neighbor.
	\item A set $v.Q \subset V$ storing a single node $w \in V$ for each non-empty subarea $A \in Q(v)$ such that $w \in A$.
\end{itemize}

We refer to the edges represented by variables $v.left$ and $v.right$ as \emph{list edges} and to edges $(v,w)$ with $w \in v.Q$ as \emph{quad edges}.
Observe that a node $w$ is allowed to be contained in both $v.left$ (resp. $v.right$) and $v.Q$ simultaneously in a legitimate state.
The reason for this is that we allow the delegation of search messages only via quad edges (as we will see in Section~\ref{sec:protocol:routing}), so if $v$ wants to delegate a search message to the subarea containing one of its list edges, it has to make sure that there is a node in $v.Q$ for this area.

Before we can describe how we establish the correct list and quad edges, we shortly describe how a node $v$ that knows some node $w$ is able to locally determine whether $v \prec w$ or $w \prec v$ holds: $v$ just calls \Call{QuadDivision}{$\{v,w\}$, $P$, $1$} and gets a binary tree with subareas containing $v$ and $w$ as leaf nodes.
Performing a DFS on that tree as described earlier yields either $v \prec w$ or $w \prec v$.

It is important to note that using the same approach, $v$ is also able to compute the set $Q(v)$ for the current system state: $v$ just calls \Call{QuadDivision}{$\{v,v.left, v.right\}$, $P$, $1$}.
It is easy to see that the corresponding tree contains all nodes representing subareas in $Q(v)$, so $v$ just has to check each node in the tree for the properties from Definition~\ref{def:target_subareas}.
Obviously, as long as $v.left$ and $v.right$ are still subject to changes, $Q(v)$ also changes, but we will show later that by the way we defined our protocol, $Q(v)$ monotonically increases, s.t. none of the proposed properties are violated.

We now describe how we build the correct list edges at each node and then proceed with the description for quad edges.
As we have to perform actions in both parts periodically, we split the \textsc{Timeout} action into subroutines \textsc{ListTimeout} and \textsc{QuadTimeout}.
For list edges, we extend the \textsc{BuildList} protocol that is based on~\cite{DBLP:conf/alenex/OnusRS07} for the one-dimensional case to the two-dimensional case.

\subsection{List Edges} \label{sec:build_quad_tree:list}
The base of our self-stabilizing protocol consists of a sorted list for all nodes $v \in V$ based on the ordering $\prec$ from Definition~\ref{def:total_order} (see Algorithm~\ref{algo:build_list} for pseudocode).

\begin{algorithm}[ht]
\caption{The \textsc{BuildList} Protocol (executed at node $v$)}
\label{algo:build_list}
\begin{algorithmic}[1]
\Procedure{ListTimeout}{}
	\State Consistency check for $v.left$ and $v.right$ w.r.t. $\prec$
	\State $v.left \gets$ \Call{Linearize}{$v$} \Comment{Send \Call{Linearize}{$v$} message to $v.left$}
	\State $v.right \gets$ \Call{Linearize}{$v$}
\EndProcedure
\State
\Procedure{Linearize}{$w$}
	\State Consistency check for $v.left$ and $v.right$ w.r.t. $\prec$
	\If{$w \prec v.left$}
		\State $v.left \gets$ \Call{Linearize}{$w$}
	\EndIf
	\If{$v.left \prec w \prec v$}
		\State $w \gets$ \Call{Linearize}{$v.left$}
		\State $v.left \gets w$
	\EndIf
	\If{$v \prec w \prec v.right$}
		\State $w \gets$ \Call{Linearize}{$v.right$}
		\State $v.right \gets w$
	\EndIf
	\If{$w \succ v.right$}
		\State $v.right \gets$ \Call{Linearize}{$w$}
	\EndIf
\EndProcedure
\end{algorithmic}
\end{algorithm}

The main idea of \textsc{BuildList} is that each node of $V$ keeps its closest left and right neighbor in $v.left$ and $v.right$. 
More concretely, the protocol consists of two actions called \textsc{ListTimeout} and \textsc{Linearize}. 
\textsc{ListTimeout} is periodically executed and \textsc{Linearize} can be called locally or remotely. 

Whenever \textsc{ListTimeout} or \textsc{Linearize} is executed, $v$ first performs a local consistency check on its variables $v.left$ and $v.right$: It might happen that in initial states $v.left \succ v$ (or $v.right \prec v$). 
If that is the case, $v$ sets $v.left$ (or $v.right$) to $\perp$ and locally calls \Call{Linearize}{$w$} for the removed value $w$. 
In addition to the above described consistency check, $v$ \emph{introduces} itself to $v.left$ and $v.right$ in \textsc{ListTimeout} by sending a \Call{Linearize}{$v$} message to them.

In case $v$ processes a \Call{Linearize}{$w$} request, $v$ does the following: $v$ sets $v.left=w$, if $w$ is left of $v$ and closer to $v$ than $v.left$, i.e., $v.left \prec w \prec v$.
In that case, $v$ delegates the value $w'$ that got replaced by $w$ in $v.left$ to $w$, i.e., $v$ calls \Call{Linearize}{$w'$} on $w$.
In case $v.left=\perp$, $v$ just sets $v.left=w$. 
In case $w$ is right of $v$, $v$ proceeds analogously for $v.right$. 

Note that node references are never deleted but delegated until the referenced node arrives at the correct spot in the sorted list. From~\cite{DBLP:conf/alenex/OnusRS07} we derive the following result.
The proof works the same as for the one-dimensional setting, we just replace the (one-dimensional) operator $<$ by $\prec$.

\begin{lemma} \label{lemma:build_list:self_stabilization}
	\textsc{BuildList} is self-stabilizing.
\end{lemma}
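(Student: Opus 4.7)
The plan is to mirror the one-dimensional linearization analysis of Onus, Richa, and Scheideler~\cite{DBLP:conf/alenex/OnusRS07} and verify that every step of that argument goes through once the numerical ordering $<$ is replaced by the total order $\prec$ from Definition~\ref{def:total_order}. The crucial enabling fact is that $\prec$ is a strict total order on $V$: node coordinates are unique by assumption, and for any two nodes $u,v$ the call \textsc{QuadDivision}($\{u,v\}$, $P$, $1$) produces a binary tree whose DFS yields a well-defined relative order that, because cuts depend only on geometric position, coincides with the order these two nodes have in the full $\prec$ on $V$. Consequently every predicate tested by \textsc{BuildList}, namely $w \prec v.left$, $v.left \prec w \prec v$, $v \prec w \prec v.right$, and $w \succ v.right$, is locally decidable using only the two nodes involved, exactly as in the one-dimensional case. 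Since the control flow, the message set, and the local variables of the protocol are structurally identical to the 1D version under this substitution, the rest of the proof is a transcription.

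For convergence I would first establish a \emph{reference preservation} invariant: neither the consistency check nor any branch of \textsc{Linearize}($w$) ever discards a node reference; any reference overwritten in $v.left$ or $v.right$ is immediately forwarded via a fresh \textsc{Linearize} message. Together with fair message receipt this ensures references can only migrate, never vanish. I would then reuse the potential function from the 1D proof, for instance the sum, over all outstanding references in the system (both in local variables and in channels), of the $\prec$-rank distance between the current host of the reference and its correct host, augmented by a term counting pointer inversions. Each execution of \textsc{Linearize} either strictly decreases this potential or relays the reference to a host strictly nearer in $\prec$-rank to its destination. Because the potential is a non-negative integer bounded by a function of $n$, and corrupted messages in initial channels can only trigger finitely many extra decreases before being expelled by the consistency check, fairness forces the potential to hit zero in a finite prefix of every computation.

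When the potential is zero, each node's left and right pointers coincide with its closest $\prec$-neighbors (or $\perp$ at the extremes) and no stale or corrupted \textsc{Linearize} messages remain, which is exactly condition (a) of Definition~\ref{def:legal_state}. Closure is then routine: the periodic \textsc{Linearize}($v$) messages spawned by \textsc{ListTimeout} at $v$'s correct neighbors only trigger the no-op branches $w = v.left$ or $w = v.right$, so the list stays intact. The step I expect to require the most care is not the algebraic manipulation but the bookkeeping around corrupted initial messages and implicit references floating in channels: one has to verify that the potential remains well-defined on arbitrary initial states, and that the consistency check indeed rejects every inconsistent reference (say, one with $v.left \succ v$) by turning it into a new \textsc{Linearize} call rather than silently storing it. Both concerns are handled in the 1D proof and transfer verbatim once $<$ is replaced by $\prec$.
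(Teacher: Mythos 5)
Your proposal is correct and follows essentially the same route as the paper, whose entire proof of Lemma~\ref{lemma:build_list:self_stabilization} is the remark that the argument of~\cite{DBLP:conf/alenex/OnusRS07} carries over verbatim once $<$ is replaced by $\prec$. Your version is in fact more careful than the paper's: the observation that the pairwise order computed from \textsc{QuadDivision}$(\{u,v\}, P, 1)$ agrees with the restriction of the global order $\prec$ (because the dyadic cut hierarchy is independent of the node set, so the first separating cut is the same in both trees) is exactly the point needed to justify the substitution, and the paper leaves it implicit.
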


\subsection{Quad Edges} \label{sec:build_quad_tree:quad}
Now we describe the approach for generating quad edges.
Note that $v$ can easily check whether there exists a subarea $A \in Q(v)$ for which $v$ does not yet have a quad edge, by assigning each $w \in v.Q$ to the subarea in $Q(v)$ that contains $w$.

The protocol consists of actions \textsc{QuadTimeout} and \textsc{QLinearize} (see Algorithm~\ref{algo:build_quad_tree}).
Before executing any statement of any of these actions, a node $v$ always checks its set $v.Q$ for consistency, ensuring that no two nodes $w_1, w_2 \in v.Q$ are contained in the same subarea $A \in Q(v)$.
In case $v$ finds out that $w_1, \ldots , w_k \in v.Q$ are contained in the same subarea $A \in Q(v)$ (which may happen in an initial state), we only keep one of these nodes (arbitrarily chosen) and delegate all other nodes $w_i$ to \textsc{BuildList} by calling \Call{Linearize}{$w_i$}.

In \textsc{QuadTimeout}, $v$ chooses a node $w$ from its set $v.Q$ in round-robin fashion and delegates $w$ to \textsc{BuildList}.
This has to be done to ensure that the sorted list converges even if the initial weakly connected graph consists of quad edges only.
Afterwards $v$ introduces itself to its left and right neighbors $v.left$ and $v.right$ by calling \textsc{QLinearize} on them.
As part of the same \textsc{QLinearize} request, $v$ asks these nodes if they know a node $w \in A$, where $A \in Q(v)$ is a subarea, for which $v$ does not have a quad edge yet.
If that is the case, then $v$ will receive a \textsc{QLinearize} call containing the desired node $w$ as the answer.
The subarea $A$ is chosen in round-robin fashion as well, such that each subarea for which $v$ does not have a quad edge yet is chosen by $v$ eventually.
The reason for choosing nodes and subareas in round-robin fashion is that we do not want to overload the network with too many stabilization messages that are generated periodically.

Processing a \Call{QLinearize}{$w$, $A$} request at node $v$ works as follows: We delegate $w$ to \textsc{BuildList} and then check if $w$ is contained in a subarea $A' \in Q(v)$  for which there does not exist a node $w' \in v.Q$ with $w' \in A'$.
If that is the case, then $v$ does not have a quad edge to the subarea $A'$ yet, so $v$ includes $w$ into $v.Q$, which corresponds to $v$ generating a new quad edge $(v,w)$.
Finally $v$ generates an answer to $w$ as already described above, in case $v$ knows a node (including itself) that is contained in $A$.

\begin{algorithm}[ht]
\caption{Protocol for establishing quad edges (executed at node $v$)}
\label{algo:build_quad_tree}
\begin{algorithmic}[1]
\Procedure{QuadTimeout}{}
	\State Consistency check for $v.Q$
	\State Choose $w \in v.Q$ in round-robin fashion and call \Call{Linearize}{$w$} \label{algo:build_quad_tree:timeout:delegate}
	\State Determine $A(v)$ and $Q(v)$ via \textsc{QuadDivision}
	\State Choose $A \in Q(v)$ in round-robin fashion s.t. $\forall w \in v.Q: w \not \in A$ \label{algo:build_quad_tree:timeout:choose_subarea}
	\State $v.left \gets$ \Call{QLinearize}{$v$, $A$} \Comment{$A = \perp$ if no such $A$ exists}
	\State $v.right \gets$ \Call{QLinearize}{$v$, $A$}
\EndProcedure
\State
\Procedure{QLinearize}{$w$, $A$}
	\State Consistency check for $v.Q$
	\State \Call{Linearize}{$w$} \Comment{Delegation to \textsc{BuildList}}
	\State Determine $A(v)$ and $Q(v)$ via \textsc{QuadDivision}
	\If{$\exists A' \in Q(v)\ \forall w' \in v.Q: w' \not \in A'$}
		\State $v.Q \gets v.Q \cup w$ \Comment{Generates new quad edge $(v,w)$}
	\EndIf
	\If{$A \neq \perp \wedge\ \exists w' \in v.Q \cup v: w' \in A$}
		\State $w \gets$ \Call{QLinearize}{$w'$, $\perp$} \Comment{Answers $w$ so $w$ can generate quad edge $(w,w')$}
	\EndIf
\EndProcedure
\end{algorithmic}
\end{algorithm}

\subsection{Routing} \label{sec:protocol:routing}
As the last part of this section, we state the routing protocol \textsc{SearchQuad} for our topology (see Algorithm~\ref{algo:search_quad_tree} for pseudocode).

Before a node $v$ processes a search message, it first performs the same consistency checks on its set $v.Q$ as it has been described previously.
This makes sure that our routing protocol is well-defined.
Now assume node $v$ wants to process a \Call{Search}{$u$, $(x,y)$} message.
Consider the subarea $A(v)$ and the set $Q(v)$ of subareas as defined in Definition~\ref{def:target_subareas}.
Then $v$ determines the subarea $A(x,y)$ out of $Q(v) \cup A(v)$ that contains the position $(x,y)$.
If $A(x,y) = A(v)$, then the algorithm terminates and returns $v$ itself to $u$ as the result.
Otherwise, $v$ delegates the \Call{Search}{$u$, $(x,y)$} message to the node $w \in v.Q$ with $w \in A(x,y)$.
If no edge to a node in $A(x,y)$ exists in $v.Q$, then the algorithm terminates and returns $v$ itself to $u$ as the result.
Consider Figure~\ref{fig:routing_example} for some examples.

\begin{figure}[ht]
 	\centering
 	\includegraphics[scale=1]{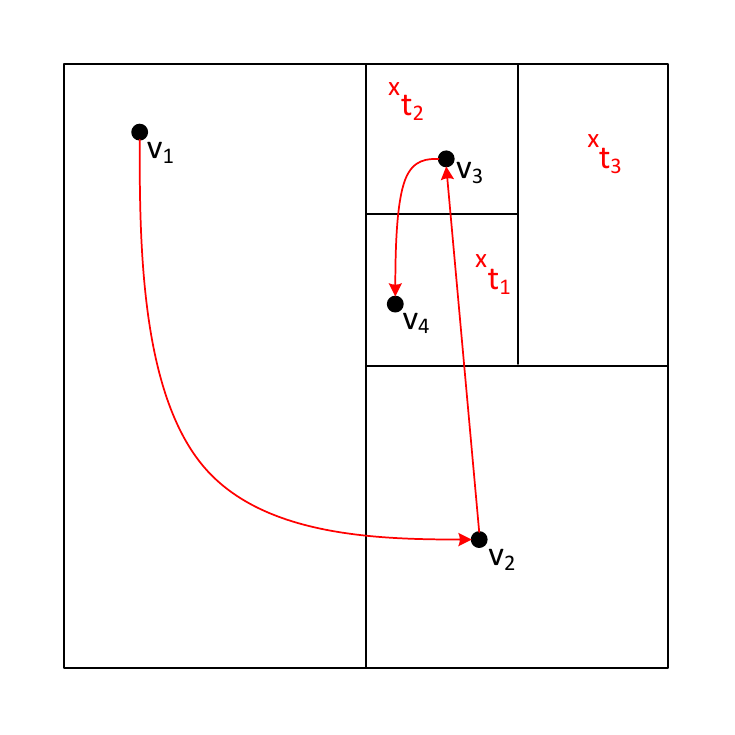}
 	\caption{Illustration for the delegation of different \textsc{Search} messages for target coordinates $t_1$, $t_2$ and $t_3$ starting at $v_1$. \textsc{Search}($v_1$,$t_1$) and \textsc{Search}($v_1$,$t_2$) returns the nodes that share the same subarea with the target point (traversing paths $(v_1,v_2,v_3,v_4)$ for $t_1$ and $(v_1,v_2,v_3)$ for $t_2$). The search for $t_3$ yields the path $(v_1,v_2,v_3)$ until \textsc{SearchQuad} terminates, as $v_3$ does not have a quad edge to the subarea containing $t_3$.}
 	\label{fig:routing_example}
\end{figure}

\begin{algorithm}[ht]
\caption{The \textsc{SearchQuad} Protocol (executed at node $v$)}
\label{algo:search_quad_tree}
\begin{algorithmic}[1]
\Procedure{Search}{$u, (x,y)$}
	\State Consistency check for $v.Q$
	\State Determine $A(v)$ and $Q(v)$ via \textsc{QuadDivision}
	\If{$(x,y) \in A(v)$}
		\State \Return $v$ \Comment{Search terminated - $v$ is returned to $u$ as the result}
	\Else
		\State Let $A(x,y) \in Q(v)$ with $(x,y) \in A(x,y)$
		\If{$\exists w \in v.Q: w \in A(x,y)$}
			\State $w \gets$ \Call{Search}{$u, (x,y)$} \Comment{Delegate request via quad edge $(v,w)$}
		\Else
			\State \Return $v$
		\EndIf
	\EndIf
\EndProcedure
\end{algorithmic}
\end{algorithm}

\section{Analysis} \label{sec:analysis}
\subsection{Self-stabilization Analysis} \label{sec:analysis:stabilization}
This section is dedicated to show that \pname{} is self-stabilizing according to Definition~\ref{def:self_stabilization}, i.e., \pname{} satisfies convergence and closure.

Recall that our system initially is given by an arbitrary weakly connected graph $G=(V,E)$.
As the graph may consist of both list- and quad edges, we denote the set of list edges by $E_L$ and the set of quad edges by $E_Q$, so $G = (V, E_L \cup E_Q)$.
In each action executed by node $v$, we perform a consistency check for $v$'s variables, so we can also assume that no inconsistencies appear, like $v \prec v.left$, $v.right \prec v$ or $v$ having multiple quad edges into the same subarea.
We first argue that we get rid of corrupted messages that may exist in an initial state of the system:

\begin{lemma} \label{lemma:convergence:corrupted_messages}
	Given any weakly connected graph $G = (V, E_L \cup E_Q)$ and a set of corrupted messages $M$ spread arbitrarily over all node channels.
	Eventually, $G$ is free of corrupted messages, while staying weakly connected.
\end{lemma}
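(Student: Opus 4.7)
The statement decomposes naturally into two parts: (i) the channels become free of corrupted messages in finite time, and (ii) weak connectivity of $G = (V, E_L \cup E_Q)$ is preserved throughout. I would prove each in turn, relying on two properties of our model: \emph{fair message receipt} and the fact that our protocol is \emph{compare-store-send}. For part (i), I would adopt the natural notion that a message is corrupted if it is present in the initial system state and is not the product of any prior protocol action. Fair message receipt guarantees that every such message is eventually processed by its target node and then removed from the channel. The action triggered at the target may emit new messages, but these are produced by the protocol from the current local state together with the parameters of the incoming message, and hence qualify as protocol-generated rather than corrupted. Since the initial set of corrupted messages is finite and each is processed in finite time, eventually all of them are drained and no corrupted messages remain.

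For part (ii), I would argue by case analysis over the actions \textsc{ListTimeout}, \textsc{Linearize}, \textsc{QuadTimeout}, and \textsc{QLinearize}, together with their consistency checks. The central invariant is the compare-store-send property: no node reference is ever destroyed; a reference that leaves one slot of the state is either stored in another local variable (an explicit edge) or forwarded inside a newly sent message (an implicit edge). For example, in \textsc{Linearize}, when $w$ replaces $v.left$ because $v.left \prec w \prec v$, the displaced reference $w' = v.left$ is immediately sent as a \textsc{Linearize} request to $w$, so the old edge $(v, w')$ is replaced by the explicit edge $(v, w)$ together with the implicit edge $(w, w')$, and weak connectivity of the underlying undirected graph is preserved. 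The other branches of \textsc{Linearize}, the symmetric cases for $v.right$, the handling of $v.Q$ in \textsc{QLinearize}, and the consistency checks that push displaced values into a \textsc{Linearize} message all follow the same pattern; the \textsc{Timeout} routines only add messages to channels, so they cannot sever any edge.

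Combining (i) and (ii) yields the lemma. The main obstacle I expect is making the case analysis for (ii) complete and tight — each action branches on the relative position of nodes under $\prec$ or on containment in subareas, and each branch needs a short argument that no node reference is silently lost. One has to be especially careful with the delegations that cross between the list and quad layers, such as the round-robin step in \textsc{QuadTimeout} that hands a quad-edge endpoint to \textsc{BuildList} and the duplicate-elimination step in the $v.Q$ consistency check; in both of these, the compare-store-send invariant is precisely the tool that ensures the displaced reference reappears as either an explicit or implicit edge, so that weak connectivity is maintained uniformly throughout.
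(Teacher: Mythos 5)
Your proposal is correct and follows essentially the same route as the paper's (much terser) proof: fair message receipt drains the finite set of initially corrupted messages, and the fact that node references are never discarded but only stored or re-delegated preserves weak connectivity. Your definitional observation that messages spawned while processing a corrupted message are ordinary protocol messages is just a cleaner phrasing of the paper's remark that a corrupted message ``cannot be delegated infinitely.''
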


\begin{proof}
	By definition of \pname{} we do not delete any node but only delegate its reference to \textsc{BuildList} keeping $G$ weakly connected at any point in time.
	Also notice that a corrupted message $m \in M$ cannot be delegated infinitely by the way we defined the \textsc{Linearize} and \textsc{QLinearize} actions.
	Because we assume fair message receipt we know that eventually all messages in $M$ will be processed.
\end{proof}

To show the convergence property, we prove convergence for the sorted list and then show that once the sorted list has stabilized, all desired quad edges will eventually be established.

\begin{lemma} \label{lemma:convergence:list}
	For a weakly connected graph $G = (V, E_L \cup E_Q)$, \pname{} eventually transforms $G$ such that the explicit edges in $E_L$ induce a sorted list w.r.t. $\prec$.
\end{lemma}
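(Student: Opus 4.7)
The plan is to reduce the statement to the one-dimensional result (Lemma~\ref{lemma:build_list:self_stabilization}) by showing that every reference currently held inside a quad edge, or carried by a corrupted message, eventually gets fed into the \textsc{BuildList} sub-protocol, after which the standard linearization argument takes over with respect to the two-dimensional order $\prec$ instead of the one-dimensional $<$.

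First I would set up the framework. By Lemma~\ref{lemma:convergence:corrupted_messages} we may assume that after some finite time no corrupted messages exist in the channels, and by the consistency checks at the start of every action we may assume $v.left \prec v \prec v.right$ whenever both are non-$\perp$. Let $R(v)$ denote the multiset of all node references known at $v$ — that is, $\{v.left, v.right\} \cup v.Q$ together with every reference occurring in a pending message in $v.Ch$. I would argue two invariants: (i) no reference is ever lost (a reference removed from $v.left$, $v.right$, or $v.Q$ is always replaced by a \textsc{Linearize} or \textsc{QLinearize} call carrying it), and (ii) consequently the union $\bigcup_v R(v)$ never shrinks in the sense of node identities, so $G$ stays weakly connected throughout (this is exactly the compare-store-send invariant the paper already relies on).

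Next I would argue that every reference in any $v.Q$ is eventually delegated to \textsc{BuildList}. This is immediate from line~\ref{algo:build_quad_tree:timeout:delegate} of \textsc{QuadTimeout}, which in round-robin fashion calls \Call{Linearize}{w} on each $w \in v.Q$; since \textsc{QuadTimeout} fires periodically and $v.Q$ is finite, every element of $v.Q$ at any given time produces a \textsc{Linearize} message within a bounded number of \textsc{QuadTimeout} rounds. Similarly, every \textsc{QLinearize}$(w,A)$ call unconditionally invokes \Call{Linearize}{w} on its first argument. Consequently every reference anywhere in the system is, within finite time, either already the value of some $v.left$ or $v.right$ in the correct order or sitting inside a \textsc{Linearize} request in some channel. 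Combined with fair message receipt, this means the behavior of the list-building subsystem after this point is indistinguishable from a pure \textsc{BuildList} execution starting from the weakly connected graph induced by list edges together with all pending \textsc{Linearize} messages.

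From here I invoke the one-dimensional convergence argument from \cite{DBLP:conf/alenex/OnusRS07}, adapted by literally replacing $<$ with $\prec$ throughout, as the paper notes. The key point is that the argument uses only three properties of the order: antisymmetry, transitivity, and totality on $V$. The order $\prec$ induced by the DFS traversal of the \textsc{QuadDivision} tree has all three (uniqueness of coordinates gives totality, and the DFS traversal is a bona fide linear extension). The potential function used in the one-dimensional proof — for instance the sum over all references of their distance (in the $\prec$-rank sense) to their destination node — decreases strictly whenever a \textsc{Linearize} message is forwarded to a strictly closer host and is bounded below by zero, yielding convergence. At termination every node $v$ has its $\prec$-closest left and right neighbors stored in $v.left$ and $v.right$, which is precisely the sorted list w.r.t.\ $\prec$.

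The main obstacle, and the reason this is not completely trivial from Lemma~\ref{lemma:build_list:self_stabilization}, is the coexistence of quad edges: one has to rule out the possibility that some reference lives forever inside $v.Q$ (or inside a message used only by the quad subsystem) without ever being offered to \textsc{BuildList}. The round-robin delegation in \textsc{QuadTimeout} together with the unconditional \Call{Linearize}{w} inside \textsc{QLinearize} is exactly what closes this gap, and I would state and prove this as a small auxiliary claim before appealing to the one-dimensional result.
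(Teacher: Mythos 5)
Your proposal is correct and follows essentially the same route as the paper's proof: the key step in both is that the round-robin delegation in \textsc{QuadTimeout} (Line~\ref{algo:build_quad_tree:timeout:delegate}) eventually feeds every quad edge into \textsc{BuildList}, so the graph induced by list edges becomes weakly connected and Lemma~\ref{lemma:build_list:self_stabilization} applies with $<$ replaced by $\prec$. The additional detail you supply (the no-reference-lost invariant and the potential-function sketch) is supporting material for Lemmas~\ref{lemma:convergence:corrupted_messages} and~\ref{lemma:build_list:self_stabilization} rather than a different argument.
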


\begin{proof}
	In \textsc{QuadTimeout} a node $v$ chooses one of its quad edges $(v,w) \in E_Q$ and delegates it to \textsc{BuildList} (Algorithm~\ref{algo:build_quad_tree}, Line~\ref{algo:build_quad_tree:timeout:delegate}), creating an implicit list edge $(v,w) \in E_L$.
	Since we execute \textsc{QuadTimeout} periodically at each node $v \in V$ and choose quad edges in round-robin fashion, it is guaranteed that eventually each quad edge is delegated to \textsc{BuildList}.
	This implies that the graph $G'=(V,E_L)$ consisting of list edges only eventually becomes weakly connected.
	Thus we can apply Lemma~\ref{lemma:build_list:self_stabilization} to show that the sorted list converges.
\end{proof}

\begin{lemma}[Convergence] \label{lemma:convergence}
	Once the edges in $E_L$ induce a sorted list w.r.t. $\prec$, eventually a legitimate state according to Definition~\ref{def:legal_state} is reached.
\end{lemma}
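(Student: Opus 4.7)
The plan is to leverage the hypothesis that the list has converged: at this point $v.left$ and $v.right$ store the closest left and right neighbors of every $v$ w.r.t.\ $\prec$, which immediately gives condition (a) of Definition~\ref{def:legal_state}. Moreover, since $A(v)$, $v.left$ and $v.right$ are now fixed, the target set $Q(v)$ is fixed as well, and the consistency check on $v.Q$ already ensures the uniqueness part of condition (b). What remains is to show that for every $v \in V$ and every non-empty $A \in Q(v)$, a quad edge from $v$ to some node in $A$ is eventually installed; since quad edges are only added (apart from consistency pruning of duplicates), it suffices to show each such edge appears at least once.

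We would prove this by induction on $d(v, A)$, the distance in the sorted list from $v$ to the nearest node of $A$ (which is bounded by $n-1$). For the base case $d(v, A) = 1$, WLOG $v.right \in A$. Round-robin scheduling in \textsc{QuadTimeout} ensures $v$ eventually sends a $\textsc{QLinearize}(v, A)$ request to $v.right$; since $v.right$ itself lies in $A$, it replies with $\textsc{QLinearize}(v.right, \perp)$, and upon processing this reply $v$ fills the empty slot for $A$ in $v.Q$.

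For the inductive step $d(v, A) > 1$, WLOG $A$ lies entirely to the right of $v$ (being a subtree of $T$, $A$ corresponds to a contiguous block in the DFS order). Set $u := v.right$; by assumption $u \notin A$, and let $w$ be the leftmost node of $A$ in $\prec$. The key structural claim is that $A$ itself belongs to $Q(u)$, with $d(u, A) = d(v, A) - 1$. To see the membership, note that the parent of $A$ in $T$ equals the LCA of $A(v)$ and $A(w)$; if $u$ lay in the right subtree of this LCA (the one containing $A(w)$), then as the DFS-successor of $v$ it would coincide with the leftmost leaf of $A$, contradicting $u \notin A$. Hence $u$ lies in the left subtree, so the parent of $A$ contains $u$ while $A$ itself does not, giving $A \in Q(u)$. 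Contiguity of $A$ together with $u$ being the immediate list-successor of $v$ then yields $d(u, A) = d(v, A) - 1$. By the inductive hypothesis $u$ eventually establishes a quad edge to some $w' \in A$, and afterwards the next round-robin invocation of $\textsc{QLinearize}(v, A)$ from $v$ to $u$ triggers the response $\textsc{QLinearize}(w', \perp)$, so $v$ installs the desired edge.

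The hard part is precisely the structural claim $A \in Q(u)$ in the inductive step: without it the subarea $A'' \in Q(u)$ containing $w$ could differ from $A$, and a quad edge from $u$ into $A''$ would not need to land inside $A$. The observation that $u \notin A$ forces $u$ to lie on the $v$-side of the LCA of $A(v)$ and $A(w)$ is what makes $A'' = A$ and keeps the induction clean. The remaining pieces — round-robin repetition, fair message receipt, and the fact that quad edges are never discarded except to break duplicates — are routine and ensure each ``eventually'' is realized within finitely many additional \textsc{QuadTimeout} cycles.
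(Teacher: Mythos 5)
Your proof is correct and follows essentially the same route as the paper's: reduce to condition (b), fix a non-empty $A \in Q(v)$ lying (WLOG) to the right of $v$, and induct along the chain of list-successors between $v$ and $A$, using the round-robin \textsc{QLinearize} exchange to propagate a reference to a node of $A$ backward along that chain. The only difference is one of rigor: the paper merely asserts the key structural fact that every intermediate node $v_i$ also has $A \in Q(v_i)$ (``since $w$ is the outmost left node in $A$''), whereas you justify it via the LCA/contiguity argument, which is a welcome elaboration rather than a deviation.
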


\begin{proof}
	Condition (a) of Definition~\ref{def:legal_state} is already satisfied, so it remains to show (b).
	Recall that $v$ is able to compute $A(v)$ and the set of subareas $Q(v)$ by locally executing \Call{QuadDivision}{$\{v, v.left, v.right\}$, $P$, $1$}.
	As the sorted list has already converged, $Q(v)$ does not change anymore.
	Let $S \subseteq Q(v)$ be the set of subareas that contain at least one node.
	We show that eventually $v.Q$ contains one node for each of those subareas, i.e., $\forall A \in S\ \exists! w \in v.Q: w \in A$.
	For this we consider an arbitrary subarea $A \in S$ and assume w.l.o.g. that $v \prec w$ for all $w \in A$.
	Note that since nodes $v$ choose subareas $A \in Q(v)$ in round-robin fashion (Algorithm~\ref{algo:build_quad_tree}, Line~\ref{algo:build_quad_tree:timeout:choose_subarea}), it is guaranteed that $v$ chooses $A$ periodically and asks its list neighbor $v.right$ for a node in $A$ as long as $v$ does not have any quad edge to a node in $A$.
	Fix the node $w \in A$ such that $w$ is the outmost left node of $A$ in the ordering $\prec$, i.e., $\forall w' \in A, w' \neq w: w \prec w'$.
	We show that eventually $v$ will receive an implicit edge $(v,w) \in E_Q$ as part of a \textsc{QLinearize} call and thus will add $w$ to $v.Q$, transforming the implicit edge into an explicit one.
	Fix $k \in \mathbb{N}_0$ and assume that there are $k$ nodes lying between $v$ and $w$, i.e., $v \prec v_1 \prec \ldots \prec v_{k+1} \prec w$.
	Observe that any node $v_i$ with $v \prec v_i \prec w$ also needs to have a quad edge to the subarea $A$, since we defined $w$ to be the outmost left node in $A$.
	By definition of our protocol, each node $v_i$ in this chain sends out a  \textsc{QLinearize} request to $v_i.right$, demanding for a node lying within the subarea $A$.
	Thus $w$ receives such a request from $v_k$.
	As $w \in A$, $w$ answers $v_k$ by sending a \textsc{QLinearize} request containing itself back to $v_k$, such that $v_k$ establishes an explicit quad edge $(v_k,w) \in E_Q$.
	Once $v_k$ has established this edge, it answers any incoming \textsc{QLinearize} request coming from $v_{k-1}$ and demanding for a node in $A$ by sending a \textsc{QLinearize} request containing $w$ back to $v_{k-1}$.
	Note that as long as $v_{k-1}$ does not know $w$ yet, $v_k$ receives such \textsc{QLinearize} requests periodically from $v_{k-1}$.
	The chain continues inductively until $v$ has received $w$ from $v_1$, which concludes the proof.
\end{proof}

Now we show the closure property according to Definition~\ref{def:self_stabilization}.
The following lemma follows immediately from Lemma~\ref{lemma:build_list:self_stabilization}:

\begin{lemma} \label{lemma:closure:list}
	If the system is in a legitimate state according to Definition~\ref{def:legal_state}, then the explicit edges in $E_L$ are preserved at any point in time if no nodes join or leave the system.
\end{lemma}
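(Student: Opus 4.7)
The plan is to reduce the closure of the list edges to the closure direction of Lemma~\ref{lemma:build_list:self_stabilization} (the self-stabilization of \textsc{BuildList}), and then verify that the only interaction of the quad-edge machinery with list variables goes through \textsc{Linearize}, so no new sources of modification are introduced.

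First I would enumerate the places where $v.left$ and $v.right$ can be written. Inspecting Algorithms~\ref{algo:build_list} and~\ref{algo:build_quad_tree}, this happens only inside \textsc{Linearize} (via the four branches on $w$), and inside \textsc{QuadTimeout} via the \textsc{QLinearize} call; however, a \textsc{QLinearize} invocation at a neighbor performs a \Call{Linearize}{$w$} delegation and only writes into the receiver's $Q$-set, not into its list variables. Thus every modification of $v.left$ or $v.right$ is caused by a \textsc{Linearize}($w$) activation at $v$.

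Next I would show that in a legitimate state no such activation changes anything. By Definition~\ref{def:legal_state}(a), $v.left$ is the closest left neighbor and $v.right$ the closest right neighbor w.r.t.\ $\prec$. I would case-split on the incoming $w$: if $w=v$, no branch fires; if $w\prec v.left$ or $w\succ v.right$, the call is forwarded but $v$'s own variables are untouched; and the branches $v.left\prec w\prec v$ and $v\prec w\prec v.right$ are vacuous because by the legitimate-state hypothesis there is no node strictly between $v$ and its stored neighbors. The consistency check at the top of \textsc{Linearize} and \textsc{QLinearize} is also inert, since the legitimate-state assumption rules out $v.left\succ v$ and $v.right\prec v$. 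Hence every $E_L$ edge persists.

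I would close by noting that this is exactly the closure half of Lemma~\ref{lemma:build_list:self_stabilization} applied in our two-dimensional setting (the proof is identical modulo replacing $<$ by $\prec$), together with the observation that the quad-edge protocol only feeds \textsc{Linearize} with references already present in the system and never writes to $v.left$ or $v.right$ directly. The only subtle point, and the one I would be most careful about, is checking that delegations produced by \textsc{QuadTimeout} and \textsc{QLinearize} cannot smuggle in a node $w$ satisfying $v.left\prec w\prec v$ or $v\prec w\prec v.right$; but since in a legitimate state every node in the system already occupies its correct position in the sorted list, no such $w$ exists, and the argument goes through.
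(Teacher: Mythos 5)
Your proof is correct and follows the same route as the paper, which simply asserts that the lemma ``follows immediately'' from the self-stabilization (closure half) of \textsc{BuildList}. You supply the details the paper omits --- in particular the check that \textsc{QuadTimeout} and \textsc{QLinearize} never write to $v.left$ or $v.right$ and can only feed \textsc{Linearize} references of existing nodes, which in a legitimate state cannot lie strictly between a node and its stored neighbors --- and these details are all sound.
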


Thus it remains to show closure for quad edges:

\begin{lemma} \label{lemma:closure:quadtree}
	If the system is in a legitimate state according to Definition~\ref{def:legal_state}, then the explicit edges in $E_Q$ are preserved at any point in time if no nodes join or leave the system.
\end{lemma}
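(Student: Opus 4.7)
The plan is to show that, starting from a legitimate state, no action of \pname{} can modify any $v.Q$ in a way that breaks Definition~\ref{def:legal_state}(b). The only places where $v.Q$ is touched are the consistency check (which can remove elements) and the line $v.Q \gets v.Q \cup w$ inside \textsc{QLinearize} (which can add one). I would handle these two cases separately, after first using Lemma~\ref{lemma:closure:list} to fix $v.left$, $v.right$, and hence $A(v)$ and $Q(v)$, so that every local invocation of \textsc{QuadDivision}$(\{v, v.left, v.right\}, P, 1)$ returns the same tree throughout the argument.

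First I would argue the consistency check is a no-op. By hypothesis (b) of Definition~\ref{def:legal_state}, for every non-empty $A \in Q(v)$ there is \emph{exactly one} $w \in v.Q$ with $w \in A$, and empty subareas contain no node at all; therefore no two elements of $v.Q$ share a subarea and there is nothing for the consistency check to remove. Next I would show that the guard for adding $w$ to $v.Q$ in \textsc{QLinearize} never fires. Following the textual description of the protocol, this guard requires that $w$ itself lie in some $A' \in Q(v)$ for which $v.Q$ contains no node. Because $w \in V$, $w$ is contained either in $A(v)$ (in which case $w = v$ and $v \notin Q(v)$, so the guard fails) or in some $A' \in Q(v)$. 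In the latter case $A'$ is non-empty since it contains $w$, so by Definition~\ref{def:legal_state}(b) there is already a $w'' \in v.Q$ with $w'' \in A'$, and the guard again fails. Thus $v.Q$ is never enlarged.

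I would then check the remaining interactions that generate \textsc{QLinearize} traffic to make sure none of them indirectly enlarges a $v.Q$ elsewhere. In \textsc{QuadTimeout}, the round-robin choice of $A$ ranges over subareas of $Q(v)$ without a quad edge; by the legitimate-state hypothesis the only such subareas are empty ones, so $A$ is either $\perp$ or empty, and in both cases the response branch of the receiving \textsc{QLinearize} finds no $w' \in A$ and sends nothing. Responses of the form \textsc{QLinearize}$(w',\perp)$ skip the response branch entirely. The embedded \textsc{Linearize}$(w)$ calls can only affect list edges, and those are preserved by Lemma~\ref{lemma:closure:list}; since list edges determine $Q(v)$, the set $Q(v)$ remains the same at every node for all time.

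Combining these observations, every $v.Q$ is unchanged from the point the system entered the legitimate state, and the list edges are preserved by Lemma~\ref{lemma:closure:list}, so condition (b) of Definition~\ref{def:legal_state} continues to hold and the explicit quad edges are preserved. The main obstacle I anticipate is the book-keeping around the add-to-$v.Q$ guard: the pseudocode states it in a weaker form than the textual description, so I would be careful to argue under the description's ``$w \in A'$'' interpretation (which is the only one consistent with the stated closure property) and to rule out stray \textsc{QLinearize} responses that might otherwise reintroduce a node into an already-occupied subarea.
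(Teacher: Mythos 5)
Your proof is correct and follows essentially the same route as the paper's: the key step in both is that condition (b) of Definition~\ref{def:legal_state} guarantees every non-empty subarea in $Q(v)$ already has a representative in $v.Q$, so the guard for adding $w$ in \textsc{QLinearize} never fires, while existing quad edges are never deleted. You are somewhat more thorough than the paper --- in particular, your observations that the consistency check is a no-op in a legitimate state and that the pseudocode's guard must be read with the textual ``$w \in A'$'' condition (the literal pseudocode would fire on any empty uncovered subarea) are both correct and worth making explicit.
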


\begin{proof}
	Follows from the definition of \pname{}, because once a quad edge is established, we do not remove it anymore.
	Also it is easy to see that any incoming node $w$ that is part of a \textsc{QLinearize} call is just delegated to \textsc{BuildList} by $v$ and not included into $v.Q$.
	This holds, because for the subarea $A \in Q(v)$ which contains $w$ there already has to exist a node $w' \in v.Q$ with $w' \in A$, otherwise this would violate condition (b) of Definition~\ref{def:legal_state}.
\end{proof}

The combination of Lemma~\ref{lemma:closure:list} and Lemma~\ref{lemma:closure:quadtree} implies the following corollary:

\begin{corollary}[Closure] \label{corollary:closure}
	If the system is in a legitimate state according to Definition~\ref{def:legal_state}, then the explicit edges in $E_L \cup E_Q$ are preserved at any point in time if no nodes join or leave the system.
\end{corollary}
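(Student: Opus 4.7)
The corollary is essentially a bookkeeping combination of the two preceding closure lemmas, so the plan is short and mostly structural. First, I would observe that the statement asserts preservation of both kinds of explicit edges simultaneously; since Lemma~\ref{lemma:closure:list} guarantees that the list edges in $E_L$ are preserved under the hypothesis of a legitimate state with no joins or leaves, and Lemma~\ref{lemma:closure:quadtree} guarantees the same for the quad edges in $E_Q$, the union $E_L \cup E_Q$ is preserved as well.

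The only subtle point to verify, and it is what I would write out explicitly, is that the two lemmas may be applied in conjunction: the closure argument for list edges does not rely on \emph{any} quad edges being removed, and conversely the closure argument for quad edges does not rely on list edges being removed. Both lemmas work by noting that \pname{} never deletes an established edge of the respective type in a legitimate state, it only either keeps it or (for arriving \textsc{QLinearize} requests) delegates the reference to \textsc{BuildList}, which by condition~(a) of Definition~\ref{def:legal_state} cannot cause any existing list edge to vanish. Hence the actions performed by \pname{} in a legitimate state leave both edge sets pointwise intact.

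I would therefore conclude: at any subsequent system state reached from the legitimate state (under the no-join/no-leave assumption), every edge that was in $E_L$ is still present by Lemma~\ref{lemma:closure:list}, every edge that was in $E_Q$ is still present by Lemma~\ref{lemma:closure:quadtree}, and so $E_L \cup E_Q$ is preserved, which is exactly the claim.

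I do not expect a genuine obstacle here; the main thing to be careful about is to phrase the combination so that the reader sees that the two lemmas truly are independent and compose without interference, rather than hinting at a subtle interaction (e.g.\ a \textsc{QLinearize} call whose delegation to \textsc{BuildList} could perturb a list edge). A one-sentence remark to that effect, followed by the two-line combination, should suffice.
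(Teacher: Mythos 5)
Your proposal matches the paper exactly: the corollary is stated there as an immediate consequence of combining Lemma~\ref{lemma:closure:list} and Lemma~\ref{lemma:closure:quadtree}, with no further argument given. Your extra remark about the two lemmas composing without interference is a reasonable (and harmless) elaboration, but the paper does not consider it necessary.
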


Combining Lemma~\ref{lemma:convergence} and Corollary~\ref{corollary:closure} yields the main result of this section:

\begin{theorem}
	\pname{} is self-stabilizing.
\end{theorem}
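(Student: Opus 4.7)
The plan is essentially to observe that all the technical work has already been done in the preceding lemmas and corollary, and the theorem reduces to assembling these pieces according to Definition~\ref{def:self_stabilization}. So my proof would be short and structural, pointing to the two conditions (convergence and closure) in the definition of self-stabilization and invoking the corresponding established results.

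First I would recall Definition~\ref{def:self_stabilization}, which demands both convergence (every computation from an arbitrary initial state eventually reaches a legitimate state) and closure (once in a legitimate state, the system stays in legitimate states). For convergence, I would proceed in three steps matching the order established above: by Lemma~\ref{lemma:convergence:corrupted_messages} the system eventually becomes free of corrupted messages while remaining weakly connected; then by Lemma~\ref{lemma:convergence:list} the explicit list edges $E_L$ eventually form a sorted list w.r.t.\ $\prec$; and finally by Lemma~\ref{lemma:convergence} every non-empty subarea in $Q(v)$ eventually acquires a quad edge from $v$, which is exactly condition~(b) of Definition~\ref{def:legal_state}, while condition~(a) already holds from the sorted list. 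Hence a legitimate state is reached.

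For closure, I would simply cite Corollary~\ref{corollary:closure}, which combines Lemmas~\ref{lemma:closure:list} and~\ref{lemma:closure:quadtree} to show that once in a legitimate state the edges in $E_L \cup E_Q$ are preserved. Since both conditions of Definition~\ref{def:self_stabilization} hold, the theorem follows.

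The main obstacle here is conceptual rather than technical: one should verify that the three convergence lemmas really compose in the intended order, i.e.\ that the stabilization of list edges established in Lemma~\ref{lemma:convergence:list} is not later disrupted by the activity of the quad-edge subroutines, since Lemma~\ref{lemma:convergence} is stated under the assumption that $E_L$ already induces a sorted list. I would argue this briefly using Lemma~\ref{lemma:build_list:self_stabilization}: the \textsc{QLinearize} procedure only ever calls \Call{Linearize}{$\cdot$} on nodes it encounters, which is exactly the operation under which \textsc{BuildList} is self-stabilizing, so further quad-edge activity cannot destabilize the list. With that observation the three lemmas compose cleanly and the theorem follows.
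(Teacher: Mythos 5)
Your proposal is correct and follows essentially the same route as the paper, which simply combines Lemma~\ref{lemma:convergence} (together with the preceding convergence lemmas it builds on) and Corollary~\ref{corollary:closure} to conclude both properties of Definition~\ref{def:self_stabilization}. Your extra remark that quad-edge activity cannot destabilize the already-converged list (since \textsc{QLinearize} only delegates references via \textsc{Linearize}) is a valid and slightly more careful articulation of a point the paper leaves implicit.
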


\subsection{Routing Protocol Analysis} \label{sec:analysis:routing}
In this section we show that \pname{} along with the routing protocol \textsc{Search-Quad} (Algorithm~\ref{algo:search_quad_tree}) satisfies geographic monotonic searchability (Definition~\ref{def:monotonic_searchability:quad}) and thus also monotonic searchability (Definition~\ref{def:monotonic_searchability:classical}).
First we need the following technical lemma stating that for each node $v \in V$ the set $Q(v)$ monotonically increases over time:

\begin{lemma} \label{lemma:monotonic_searchability:prelemma}
	Consider an arbitrary system state at time $t$ and a node $v \in V$.
	Let $Q(v)$ be the output of \textsc{QuadDivision}$(\{v, v.left, v.right\}$, $P$, $1)$ executed at time $t$ and let $Q(v)'$ be the output of \textsc{QuadDivision}$(\{v, v.left, v.right\}$, $P$, $1)$ executed at any point in time $t' > t$.
	Then it holds $Q(v) \subseteq Q(v)'$.
\end{lemma}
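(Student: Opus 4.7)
The plan is to reduce the lemma to two ingredients: (i) a monotonicity property of the pair $(v.left, v.right)$ under the \textsc{BuildList} protocol, and (ii) an order-convexity property of dyadic subareas with respect to $\prec$. Inspecting Algorithm~\ref{algo:build_list}, the only nontrivial updates to $v.left$ either remove a value $\succ v$ via the consistency check or replace the current $L$ by some $L'$ with $L \prec L' \prec v$, and a symmetric statement holds for $v.right$. Thus, provided the consistency check has fired at least once at $v$ before time $t$ (as is the case whenever \textsc{QuadDivision} is actually invoked by the protocol), the values at times $t$ and $t'$ satisfy $L \preceq L' \prec v \prec R' \preceq R$, and any slot equal to $\perp$ at time $t$ only gets filled over time, which merely enriches the input to \textsc{QuadDivision}.

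Next I would establish the order-convexity claim: because $\prec$ is defined (Definition~\ref{def:total_order}) as the DFS in-order traversal of the full alternating dyadic partition tree of $P$, the positions lying in any dyadic subarea $B$ form a $\prec$-contiguous set; equivalently, whenever $u_1 \prec u \prec u_2$ with $u_1, u_2 \in B$, we have $u \in B$. Combining the two ingredients, let $T$ and $T'$ be the trees produced by \textsc{QuadDivision} at times $t$ and $t'$, and let $B$ be any proper ancestor of $A(v)$ in $T$. Then $B$ was subdivided in $T$ because it contains $v$ together with at least one of $L, R$; by convexity applied to the triple $(L, L', v)$ or $(v, R', R)$, the same $B$ also contains $v$ together with at least one of $L', R'$, so $B$ is likewise subdivided in $T'$. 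Since both trees are subtrees of the same fixed dyadic partition, the root-to-$A(v)$ path in $T$ must be a prefix of the root-to-$A(v)'$ path in $T'$; each sibling along this prefix is exactly an element of $Q(v)$ and, in $T'$, remains an off-path subarea whose parent contains $v$, so it belongs to $Q(v)'$.

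I expect the main obstacle to be the clean statement and proof of the order-convexity claim, since the alternating vertical/horizontal cuts of \textsc{QuadDivision} do not reduce straight away to a one-dimensional dyadic order; the cleanest route is probably induction on the depth of $B$ in the full partition tree, using that each recursive cut splits a $\prec$-contiguous set into two $\prec$-contiguous halves. The remaining work — tracking which dyadic subareas appear as nodes of $T$ and $T'$, and checking that the siblings along the old $v$-path survive in $T'$ — is then essentially bookkeeping on the fixed dyadic partition.
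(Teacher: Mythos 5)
Your proposal is correct and follows essentially the same route as the paper: both rest on the observation that \textsc{Linearize} only ever replaces $v.left$ and $v.right$ by nodes strictly closer to $v$ in $\prec$ (after the consistency check has run), so the locally computed tree only refines around $A(v)$ and $Q(v)$ can only gain the siblings created by further cutting $A(v)$. The one difference is that you isolate and prove the order-convexity of dyadic subareas explicitly, a step the paper's proof uses implicitly when asserting that subsequent \textsc{QuadDivision} calls only add subareas obtained by cutting $A(v)$; this makes your version more rigorous but not a different argument.
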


\begin{proof}
	By definition of our protocols it holds that if node $v$ locally calls \textsc{QuadDivision}$(\{v, v.left, v.right\}$, $P$, $1)$ in order to compute the set $Q(v)$, then any inconsistencies regarding $v.left$ and $v.right$ are already resolved.
	The lemma then follows from the fact that \textsc{BuildList} does not replace list variables $v.left$ and $v.right$ with nodes that are further away from $v$ than the current entries.
	More formally, consider w.l.o.g. the variable $v.right$ such that $v \prec v.right$.
	By the definition of \textsc{Linearize}, $v$ does not replace $v.right$ by a node $w$ for which $v.right \prec w$ holds.
	This implies that any subsequent \Call{QuadDivision}{$\{v, v.left, v.right\}$, $P$, $1$} call only transfers subareas to $Q(v)$ that are obtained by cutting $A(v)$.
	Therefore, it holds for any subarea $A \in Q(v)$ that $A \in Q(v)'$.
\end{proof}

We are now ready to show the main result of this section:

\begin{theorem} \label{theorem:plane_mon_search}
	\pname{} along with \textsc{SearchQuad} satisfies geographic monotonic searchability.
\end{theorem}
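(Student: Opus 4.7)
The plan is to show that the routing path taken at time $t$ is also a valid path at time $t'$, ending at the same node $w$. Let $v = v_0 \to v_1 \to \ldots \to v_k = w$ denote the path that \textsc{SearchQuad} traverses at time $t$: for every $i < k$ the delegation is through a quad edge from $v_i$ to the unique node $v_{i+1} \in v_i.Q$ that lies in the subarea $A_i \in Q(v_i)$ containing $(x,y)$, and at $v_k = w$ the search terminates because $(x,y) \in A(w)$ at that moment. The two main tools will be Lemma~\ref{lemma:monotonic_searchability:prelemma} (each $Q(v)$ only grows over time, and correspondingly each $A(v)$ only shrinks) together with the fact that \pname{} never removes an explicit quad edge once it is established (Lemma~\ref{lemma:closure:quadtree}) and the consistency invariant that every $v.Q$ carries at most one node per subarea of $Q(v)$.

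For the intermediate hops I will argue by induction on $i < k$ that the search initiated at time $t'$, upon reaching $v_i$, again delegates to $v_{i+1}$. At time $t$ we had $(x,y) \notin A(v_i)$; since $A(v_i)$ can only shrink, the same holds at $t'$. By Lemma~\ref{lemma:monotonic_searchability:prelemma}, $A_i$ is still in $Q(v_i)$ at time $t'$, and every newly added subarea of $Q(v_i)$ sits inside the old $A(v_i)$ and therefore cannot contain $(x,y)$. Hence $A_i$ remains the unique subarea in $Q(v_i) \cup \{A(v_i)\}$ containing $(x,y)$. The explicit quad edge $(v_i,v_{i+1})$ is still present, and by the consistency check $v_{i+1}$ is the only node in $v_i.Q$ lying in $A_i$; \textsc{SearchQuad} therefore again forwards the request to $v_{i+1}$.

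The main obstacle is the terminating step at $w$: $A(w)$ can shrink between $t$ and $t'$, so a priori $(x,y)$ might fall into a freshly carved subarea of $Q(w)$, which would make \textsc{SearchQuad} forward the request past $w$ or terminate at a different node. To rule this out I will exploit that, by the hypothesis of the theorem, $w$ is the node the search would return in the legitimate state, so the subarea around $w$ computed in the legitimate state contains $(x,y)$. Since \textsc{Linearize} only ever replaces $w.left$ and $w.right$ with closer neighbors (and the consistency check keeps them on the correct side of $w$ in $\prec$), in any consistent state the pair $(w.left, w.right)$ is, with respect to $\prec$, weakly farther from $w$ than the legitimate pair. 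Hence \textsc{QuadDivision} at $w$ produces a coarser partition than in the legitimate state, so the current $A(w)$ is a superset of the legitimate $A(w)$ and still contains $(x,y)$. Consequently the search at time $t'$ reaches $w$ and immediately returns it, proving geographic monotonic searchability; standard monotonic searchability then follows, as noted after Definition~\ref{def:monotonic_searchability:quad}.
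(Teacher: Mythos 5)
Your proof follows essentially the same route as the paper's: an induction along the path traversed by the first request, with Lemma~\ref{lemma:monotonic_searchability:prelemma} guaranteeing that each $Q(v_i)$ only grows (so each $A(v_i)$ only shrinks), that newly carved subareas lie inside the old $A(v_i)$ and hence cannot capture $(x,y)$, and that the persistent quad edge $(v_i,v_{i+1})$ together with the consistency check forces the same delegation at time $t'$. The paper's proof stops after the delegation argument and simply asserts that the second request ``arrives at $w$ and terminates''; you are right to single out termination at $w$ as the delicate step, and your observation that the current $A(w)$ is always a superset of the legitimate $A(w)$ is the correct tool for it.

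One small slip in that final step: from ``$w$ is the node returned in the legitimate state'' you infer that $(x,y)$ lies in the legitimate $A(w)$. That is not forced. \textsc{SearchQuad} also returns $w$ in the legitimate state when $(x,y)$ lies in an \emph{empty} subarea $A' \in Q(w)$, because condition (b) of Definition~\ref{def:legal_state} only requires quad edges into non-empty subareas, so no edge into $A'$ ever exists and the request stops at $w$. Your superset argument does not cover this case, but it is easily patched: if at time $t'$ the position $(x,y)$ falls into some $A'$ in the current $Q(w)$ rather than into the current $A(w)$, then by monotonicity $A'$ also belongs to the legitimate $Q(w)$ and is therefore the legitimate subarea containing $(x,y)$; it must be empty, since otherwise the legitimate-state search would be forwarded past $w$; and since node coordinates are assumed uncorrupted, $w.Q$ can never contain a node claimed to lie in an empty subarea, so the request still terminates at $w$. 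With that addition your argument is complete and, on the termination step, more rigorous than the paper's own.
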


\begin{proof}
	Assume a \Call{Search}{$v$, $(x,y)$} request $S$ terminated and returned $w \in V$ to the initiator $v$ at time $t$, such that $w$ would also be the node that would have been returned if the system already was in a legitimate state.
	Now assume that $v$ initiates another \Call{Search}{$v$, $(x,y)$} request $S'$ at time $t' > t$.
	We show that $S'$ returns $w$ as well.

	Let $(v,v_1,\ldots,v_k,w)$ be the path that has been traversed by $S$.
	We claim that $S'$ traverses the exact same path as $S$.
	Let $Q(v)$ be the output of \Call{QuadDivision}{$\{v, v.left, v.right\}$, $P$, $1$} executed when processing $S$ at $v$ and let $Q(v)'$ be the output of \Call{QuadDivision}{$\{v, v.left, v.right\}$, $P$, $1$} executed when processing $S'$ at $v$.
	Let $A(w) \in Q(v)$ be the subarea that contains $w$ and $A(v_1) \in Q(v)$ be the subarea that contains $v_1$.
	Since $S$ has been delegated by $v$ to $v_1$, it follows from the definition of the \textsc{SearchQuad} protocol (Algorithm~\ref{algo:search_quad_tree}) that $w \in A(v_1)$.
	Lemma~\ref{lemma:monotonic_searchability:prelemma} implies that $Q(v) \subseteq Q'(v)$ and thus $A(v_1) \in Q'(v)$.
	Therefore it follows from the definition of the \textsc{SearchQuad} protocol that $v$ delegates $S'$ to $v_1$ as well.
	By arguing the same way for any node $v_i$ on the remaining path $(v_1,\ldots,v_k,w)$, we can conclude that $S'$ arrives at $w$ and terminates, which finishes the proof.
\end{proof}

As already indicated in Section~\ref{sec:problem_statement}, we obtain the following corollary:

\begin{corollary}
	\pname{} along with \textsc{SearchQuad} satisfies monotonic searchability.
\end{corollary}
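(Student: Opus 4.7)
The plan is to deduce the corollary directly from Theorem~\ref{theorem:plane_mon_search} by observing that, once the query position is restricted to the coordinates of an existing node, the two notions of monotonic searchability coincide. Definition~\ref{def:monotonic_searchability:classical} speaks only about queries \textsc{Search}($v$, $(w_x,w_y)$) whose hypothesis is that the query returned $w$ at time $t$, whereas Definition~\ref{def:monotonic_searchability:quad} speaks about queries \textsc{Search}($v$, $(x,y)$) whose hypothesis is that the query returned the node that a legitimate state would return. So it suffices to check that in a legitimate state \textsc{SearchQuad} on input $(w_x, w_y)$ indeed returns $w$; once this is known, the two hypotheses are equivalent for queries of the form $(x,y) = (w_x,w_y)$ and the conclusions are identical.

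To verify this I would trace \textsc{SearchQuad} in a legitimate state. Since $A(w)$ is a leaf of $T$ and, by construction of \textsc{QuadDivision}, contains no other node of $V$, the forwarding rule of Algorithm~\ref{algo:search_quad_tree} at any intermediate hop $u \neq w$ picks the unique subarea $A \in Q(u)$ with $(w_x,w_y) \in A$; by condition (b) of Definition~\ref{def:legal_state}, $u$ has exactly one explicit quad edge into $A$, so the message is forwarded along it. Each hop strictly descends in $T$ (the selected subarea is a strict sub-region of $A(u)$'s parent region), so after finitely many hops the walk arrives at the unique node whose leaf subarea contains $(w_x,w_y)$, which is $w$ itself, and the termination clause returns $w$.

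With this lemma in hand, the corollary is immediate: if \textsc{Search}($v$, $(w_x,w_y)$) returns $w$ at time $t$, then it returned exactly the legitimate-state answer for position $(w_x,w_y)$, so Theorem~\ref{theorem:plane_mon_search} instantiated at $(x,y) = (w_x,w_y)$ guarantees that every later \textsc{Search}($v$, $(w_x,w_y)$) also returns $w$, which is precisely Definition~\ref{def:monotonic_searchability:classical}. I do not anticipate any real obstacle; the only step worth writing out carefully is the legitimate-state trace of \textsc{SearchQuad}, and that follows mechanically from Definitions~\ref{def:target_subareas} and~\ref{def:legal_state} together with the single-quad-edge-per-subarea guarantee.
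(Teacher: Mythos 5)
Your proposal is correct and follows the same route as the paper: the corollary is obtained by specializing Theorem~\ref{theorem:plane_mon_search} to query positions that are coordinates of existing nodes, using the fact that geographic monotonic searchability is the stronger notion. The paper simply asserts this implication back in Section~\ref{sec:problem_statement} without proof, so your explicit verification that a legitimate-state \textsc{SearchQuad} on $(w_x,w_y)$ returns $w$ itself is a welcome filling-in of the one step the paper leaves implicit.
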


Finally, we are able to derive an upper bound on the number of hops for any search message if we assume that the Euclidean distance between any pair $(u,v)\in V$ is at least $\vert\vert (u,v)\vert\vert\geq \frac{1}{n}$.
We start with the following lemma:

\begin{lemma} \label{lemma:search_quad:upper_bound}
	Let $(x,y) \in [0,1]^2$ and suppose a \textsc{Search}$(u$, $(x,y))$ request reached node $v_k$ after $k \in \mathbb{N}_0$ hops, $k$ even.
	Then the maximum Euclidean distance from $v_k$ to the position $(x,y)$ is at most $1/2^{(k-1)/2}$.
\end{lemma}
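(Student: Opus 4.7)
The plan is to track, for each $i\geq 0$, the subarea $C_i\in Q(v_i)\cup\{A(v_i)\}$ which contains the target $(x,y)$ and which $v_i$ uses to decide the next hop; then to show by induction that $C_{k-1}$ has depth at least $k$ in the recursive cut tree; and finally to convert the depth bound into a Euclidean diameter bound by a direct computation of the diagonal.

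First I would observe that every subarea returned by a call of the form \Call{QuadDivision}{$\cdot$, $P$, $1$} is a \emph{dyadic rectangle}, namely of the form $[i/2^{a},(i+1)/2^a]\times[j/2^b,(j+1)/2^b]$ with $a=\lceil d/2 \rceil$ and $b=\lfloor d/2 \rfloor$, where $d$ is its depth in the cut tree. This holds because the initial cut is always a vertical bisection of $[0,1]^2$ and subsequent cuts alternate orientation and always bisect the current subarea, regardless of which three-point set is supplied. In particular, any two such subareas, even if produced by \emph{different} invocations of \textsc{QuadDivision} at different nodes, obey the nest-or-disjoint dichotomy: either they are disjoint or one is contained in the other.

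Next I would use the routing rule of \textsc{SearchQuad}: whenever the search is forwarded from $v_i$ to $v_{i+1}$, it must be that $(x,y)\notin A(v_i)$, so $C_i\in Q(v_i)$ and hence $v_i\notin C_i$, while $v_{i+1}\in C_i$ by the choice of the quad edge. I then claim $C_i\subsetneq C_{i-1}$ for every $1\leq i\leq k-1$. Indeed, both $C_i$ and $C_{i-1}$ are dyadic rectangles containing the common point $(x,y)$, so by the dichotomy one contains the other; if $C_{i-1}\subseteq C_i$ were to hold, then $v_i\in C_{i-1}\subseteq C_i$ would contradict $v_i\notin C_i$. Thus $C_i\subsetneq C_{i-1}$, and therefore $\mathrm{depth}(C_i)\geq \mathrm{depth}(C_{i-1})+1$. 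Since $C_0\in Q(v_0)$ is a proper subarea of $P$, we have $\mathrm{depth}(C_0)\geq 1$, and induction yields $\mathrm{depth}(C_{k-1})\geq k$.

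Finally, for $k$ even a dyadic rectangle of depth exactly $k$ has dimensions $1/2^{k/2}\times 1/2^{k/2}$ and diagonal $\sqrt{2}/2^{k/2}=1/2^{(k-1)/2}$, and the diagonal is non-increasing in depth, so every rectangle of depth at least $k$ has diagonal at most $1/2^{(k-1)/2}$. Since $v_k$ and $(x,y)$ both lie in $C_{k-1}$, their Euclidean distance is bounded by this diagonal. The edge case $k=0$ is immediate from $\mathrm{diam}(P)=\sqrt{2}=1/2^{-1/2}$. The only delicate step is the strict shrinkage $C_i\subsetneq C_{i-1}$, which rests on the nest-or-disjoint dichotomy for dyadic rectangles; this dichotomy holds precisely because every invocation of \textsc{QuadDivision} starts on the same square $P$ with the same cut orientation and then alternates, so that the subareas produced at all nodes live in one common underlying dyadic grid.
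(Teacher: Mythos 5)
Your proof is correct, and it reaches the same final computation as the paper (the diagonal of a depth-$k$ dyadic square is $\sqrt{2}/2^{k/2}=1/2^{(k-1)/2}$), but it supplies a key step that the paper only asserts. The paper's proof simply states that ``after two hops we reduce the area in which the target is located by a factor $0.25$'' and then applies the Pythagorean theorem; it does not explain why the target-containing subarea must strictly shrink at every hop even though each node $v_i$ computes its own local decomposition from a different three-point set $\{v_i, v_i.left, v_i.right\}$. Your argument closes exactly this gap: you observe that all subareas produced by any invocation of \textsc{QuadDivision} on $P$ live in one common dyadic grid (same initial square, same cut orientation at each depth), so any two of them are nested or disjoint, and then the routing invariant $v_i\notin C_i$ together with $v_i\in C_{i-1}$ forces $C_i\subsetneq C_{i-1}$ and hence a depth increase of at least one per hop. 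This is the substantive content of the lemma, and your version is the one that would survive scrutiny; the depth-to-diagonal conversion at the end, including the $k=0$ edge case, is routine and matches the paper's calculation.
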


\begin{proof}
	Let the even number $k \in \mathbb{N}_0$ be the number of hops until \Call{Search}{$u$, $(x,y)$} terminates.
	Let $(u_x, u_y)$ be the coordinates of $u$.
	Initially the Euclidean distance between $(u_x, u_y)$ and $(x,y)$ is maximized, if both coordinates lie on the corners of $P$ such that the straight line between $(u_x, u_y)$ and $(x,y)$ is the diagonal going through $P$.
	
	Note that after two hops we reduce the area in which the target is located by a factor $0.25$.
	When using the Pythagorean theorem to compute the length of the diagonal of the quad, we can compute the maximum distance between the node $v_k$ and $(x,y)$, which is equal to $\sqrt{(1/\sqrt{2^k})^2 + (1/\sqrt{2^k})^2} = 1/\sqrt{2^{k-1}} = 1/2^{(k-1)/2}$.
\end{proof}

We are now ready to prove the following Theorem:

\begin{theorem} \label{theorem:search_quad:hops}
	If for the Euclidean distance between any pair $(u,v) \in V$ it holds $\vert \vert (u,v)\vert \vert \geq 1/n$, then any search message is delegated at most $\mathcal O(\log n)$ times.
\end{theorem}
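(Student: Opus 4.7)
The plan is to combine the exponential distance contraction from Lemma~\ref{lemma:search_quad:upper_bound} with the minimum pairwise distance hypothesis $\|(u,v)\| \geq 1/n$ to force termination after $\mathcal{O}(\log n)$ hops. Concretely, I will identify a subarea that must contain both the current node and the target after sufficiently many hops, observe that for $k = \Theta(\log n)$ hops this subarea is too small to accommodate more than one node, and derive a contradiction if the search were to take one more hop.

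First I would invoke Lemma~\ref{lemma:search_quad:upper_bound}. If a \textsc{Search}$(u,(x,y))$ request has completed an even number $k$ of hops and reached node $v_k$, then both $v_k$ and $(x,y)$ lie inside a common subarea $R_k$ of side length $1/\sqrt{2^k}$ in each dimension, and therefore of diameter at most $1/2^{(k-1)/2}$. I view $R_k$ as the lowest common ancestor, in $v_k$'s local \textsc{QuadDivision} tree on $\{v_k, v_k.\mathit{left}, v_k.\mathit{right}\}$, of $A(v_k)$ and the leaf $A(x,y) \in Q(v_k) \cup \{A(v_k)\}$ containing $(x,y)$.

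Next I would pick the smallest even integer $k_0$ with $k_0 \geq 2\log_2 n + 2$, so that $k_0 \in \mathcal{O}(\log n)$ and $1/2^{(k_0-1)/2} < 1/n$. Then $R_{k_0}$ has diameter strictly below $1/n$; by the hypothesis that any two distinct nodes are at Euclidean distance at least $1/n$, the subarea $R_{k_0}$ contains at most one node of $V$, which must be $v_{k_0}$ itself. Now suppose for contradiction that the search has not terminated by hop $k_0$. By the definition of \textsc{SearchQuad} (Algorithm~\ref{algo:search_quad_tree}) this means $(x,y) \notin A(v_{k_0})$ and $v_{k_0}$ delegates the request along a quad edge to some $v_{k_0+1} \in v_{k_0}.Q$ lying in $A(x,y)$. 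Because $A(x,y) \subseteq R_{k_0}$ as subsets of $P$, we get $v_{k_0+1} \in R_{k_0}$; and because $v_{k_0} \notin v_{k_0}.Q$ by construction, $v_{k_0+1} \neq v_{k_0}$, producing a second node inside $R_{k_0}$, which is impossible. Hence the search must have terminated after at most $k_0 = \mathcal{O}(\log n)$ hops.

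The main obstacle will be the geometric bookkeeping implicit in Lemma~\ref{lemma:search_quad:upper_bound}: that lemma argues informally about the ``area in which the target is located'', and my use of it above requires identifying $R_k$ as a concrete ancestor region in $v_k$'s local decomposition that contains both $v_k$ and $(x,y)$ and shrinks (in each alternating coordinate) by a factor of two every two hops. Once that identification is in place, the remainder of the argument is a short distance-vs-cardinality contradiction. The case of odd $k$ is absorbed by rounding up to the next even index, which costs only an additive constant in the final bound.
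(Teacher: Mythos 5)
Your proposal is correct and follows essentially the same route as the paper: invoke Lemma~\ref{lemma:search_quad:upper_bound} to bound the diameter of the region still containing the target after $k$ hops, then use the $1/n$ minimum pairwise distance to conclude that for $k = \Theta(\log n)$ this region holds no node other than the current one, forcing termination. Your version is somewhat more careful (explicitly identifying the common region $R_k$ and deriving a contradiction from a further delegation, with a slightly sharper constant $k_0 \approx 2\log n$ versus the paper's $4\log n$), but the underlying argument is the same.
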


\begin{proof}
	Assume that a \Call{Search}{$u$, $(x,y)$} message is at node $v_k$ after $k$ hops.
	It is easy to see that after each delegation, the remaining area in which we have to search for $(x,y)$ is halved. 
	We know by Lemma~\ref{lemma:search_quad:upper_bound} that the maximum Euclidean distance from $v_k$ to $(x,y)$ within $k$ hops is at most $1/2^{(k-1)/2}$, when $k$ is even. 
	Set $k=4 \log n$.
	Then the maximum Euclidean distance is at most \[\frac{1}{2^{(4\cdot\log n -1)/2}} = \frac{1}{2^{2\cdot\log n - 1/2}} = \frac{\sqrt{2}}{2^{2\cdot\log n}} = \frac{\sqrt{2}}{n^2} = \mathcal{O}\left(\frac{1}{n^2}\right) < \mathcal{O}\left(\frac{1}{n}\right)\] which implies that the remaining area in which we have to search does not contain a node other than $v_k$, so the routing protocol terminates.
	As $k \in \mathcal O(\log n)$, the theorem follows.
\end{proof}

\section{Generalization to high-dimensional Settings} \label{appendix:multidim}
In this section we discuss how to extend our protocols for high-dimensional settings in order to support self-stabilizing octtrees with geographic monotonic searchability.
Fix a dimension $d > 2$, i.e., we are given a $d$-dimensional hypercube $P$ of unit side length.
Then each node $v$ has coordinates $(v_1,\ldots,v_d) \in [0,1]^d$.

We generalize the \textsc{QuadDivision} procedure as follows: Instead of alternating between two different cuts (vertical and horizontal cuts), we alternate between $d$ different cuts now.
Thus, for all $i \in \{1,\ldots,d\}$ we can define an \emph{$i$-cut} on the (sub-)cube $P$ whose side length in dimension $i$ is equal to $I$ as follows: We assign all points $p \in P$ whose $i^{th}$ coordinate is smaller than $\frac{1}{2} I$ to the subcube $P_1$, and the rest of the points to the subcube $P_2$.
As an example consider Figure~\ref{fig:multi_cutting_example} for a sequence of different cuts on a 3-dimensional hypercube.

\begin{figure}[ht]
 	\centering
 	\includegraphics[width=\textwidth]{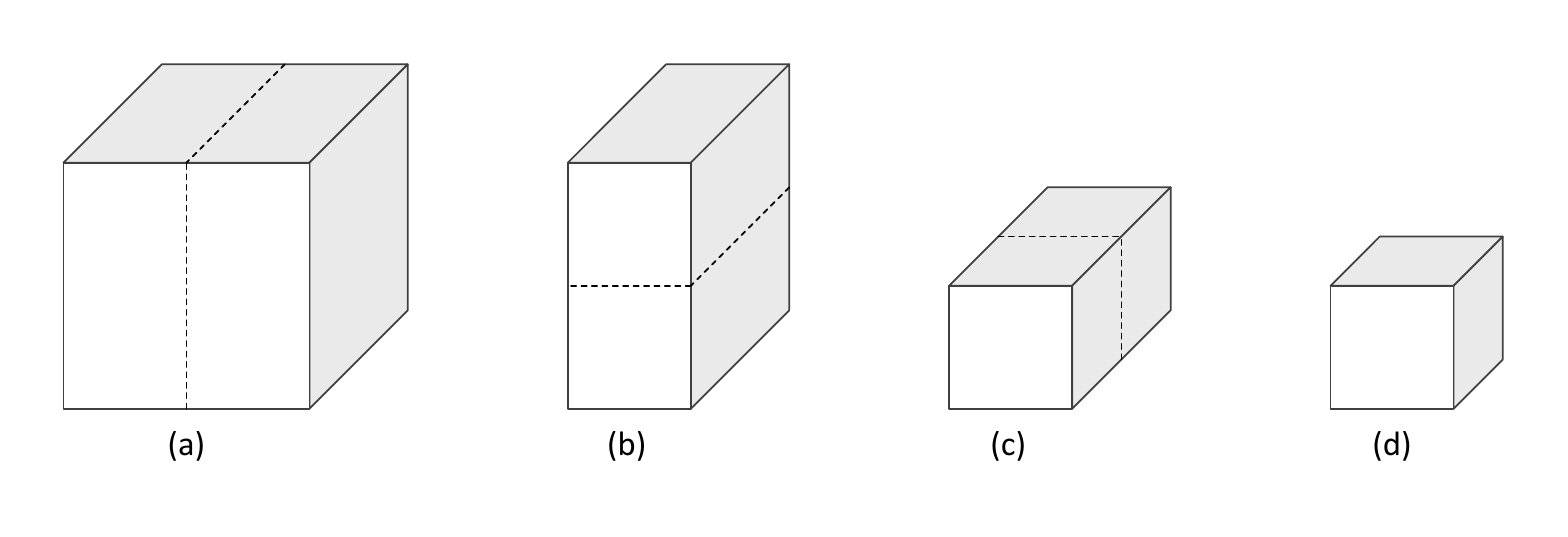}
 	\caption{Illustration for the high-dimensional equivalent of \textsc{QuadDivision}.
 	The sequence shows a $1$-cut ($(a) \rightarrow (b)$), followed by a $2$-cut ($(b) \rightarrow (c)$) and a $3$-cut ($(c) \rightarrow (d)$). The dashed lines indicate how the next cut in the sequence is applied to the (sub-)cube.}
 	\label{fig:multi_cutting_example}
\end{figure}

Thus, the \textsc{QuadDivision} algorithm remains well-defined.

Next, consider the tree $T$ that represents the output of the new \textsc{QuadDivision} algorithm.
$T$ remains a binary tree, however, the levels of the tree now alternate between $d$ different cuts instead of only $2$.
Thus, we obtain the total ordering $\prec$ the same as before, namely by performing a DFS on $T$, always going to the left child first.
This already implies that \textsc{BuildList} is also well-defined in the $d$-dimensional setting.

Last but not least, it is easy to see that one can generalize the definition for $A(v)$ and $Q(v)$ (Definition~\ref{def:target_subareas}) to dimension $d$, since the tree $T$ remains well-defined.
This implies that we have a well-defined legitimate state according to the generalization of Definition~\ref{def:legal_state} and thus the \pname{} protocol along with the routing protocol \textsc{SearchQuad} is well-defined such that all claims made in the main parts of the paper can be generalized to $d$-dimensional settings.

The following corollary summarizes the above discussion:

\begin{corollary}
	There exists a self-stabilizing protocol for a ($d$-dimensional) octtree along with a routing algorithm $\mathcal R$ that satisfies geometrical monotonic searchability.
\end{corollary}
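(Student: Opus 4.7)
The plan is to show that every object used in the two-dimensional proofs has a well-defined $d$-dimensional analogue, and that each of Lemmas 1--5, Corollary 1 and Theorems 1--2 carries over essentially verbatim once the relevant definitions are generalized. I would structure the argument as (i) verify that the generalized topology, ordering and legitimate state are well-defined, (ii) re-prove the self-stabilization and monotonicity results by observing that the original arguments are either purely combinatorial in the tree $T$ or only use the order $\prec$, and (iii) redo the quantitative hop bound with attention to the dimension.

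First, I would argue that the generalized \textsc{QuadDivision} (alternating through $d$ cutting directions rather than two) still produces a finite binary tree $T$ on a finite node set $V$ with pairwise distinct coordinates, by the same induction on the number of nodes per subcube. Since $T$ remains a binary tree, the DFS traversal that always goes left first still yields a total order $\prec$ on $V$, which makes Algorithm 2 (\textsc{BuildList}) well-defined in arbitrary dimension and Lemma 1 immediately applicable. Definition 3 lifts unchanged: $A(v)$ is the unique leaf of $T$ containing $v$, and $Q(v)$ is the set of sibling subcubes along the root-to-$A(v)$ path. The legitimate state (Definition 4), the protocol \pname{} (Algorithm 3) and the routing protocol \textsc{SearchQuad} (Algorithm 4) then have identical statements. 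With these definitions in place, Lemmas 2--4 and the closure lemmas carry over literally: they only use that references are never deleted, the round-robin discipline in \textsc{QuadTimeout}, and the structural observation that every node strictly between $v$ and the $\prec$-leftmost element of a non-empty $A \in Q(v)$ also has a quad edge to $A$, which holds in any dimension since $Q(v)$ is defined purely from $T$. The key structural fact behind Lemma 5 and Theorem 1, namely that as $v.left$ and $v.right$ become $\prec$-closer to $v$ the root-to-$A(v)$ path only grows longer so that each sibling subcube previously in $Q(v)$ stays in $Q(v)'$, is again a purely combinatorial statement about $T$ and therefore dimension-independent.

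The main obstacle I anticipate is the quantitative Theorem 2: in $d$ dimensions a full round of $d$ consecutive cuts is needed to halve the side length of the enclosing subcube in every direction, so the diameter shrinks by a factor $2$ every $d$ hops instead of every $2$ hops. The analogue of Lemma 6 becomes a bound of order $\sqrt{d}/2^{\lfloor k/d \rfloor}$ on the distance from the current holder to the target, and forcing this below $1/n$ yields $k = \mathcal{O}(d \log n)$. For fixed $d$ this is still $\mathcal{O}(\log n)$ as required, but this calculation is the one place in the analysis where the dimension enters quantitatively and has to be redone explicitly; every other step is either combinatorial in $T$ or depends only on $\prec$ and transfers without modification.
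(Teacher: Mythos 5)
Your proposal is correct and follows essentially the same route as the paper: the paper's own justification is exactly the observation that the generalized \textsc{QuadDivision} still yields a binary tree $T$, hence the order $\prec$, the sets $A(v)$ and $Q(v)$, the legitimate state, and all the self-stabilization and monotonicity arguments carry over unchanged because they are purely combinatorial in $T$ or depend only on $\prec$. Your explicit treatment of the hop bound (diameter halving every $d$ hops, giving $\mathcal{O}(d\log n)$) is more careful than the paper's one-line remark, but that bound is not part of the corollary's claim, so this is a bonus rather than a divergence.
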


It is also easy to see that the generalized version of \textsc{SearchQuad} delegates a \textsc{Search} message at most $\mathcal O(\log n)$ times until termination in case the Euclidean distance between any two nodes $v, w \in V$ is at least $1/n$.

Finally we want to emphasize that the variables for each node $v \in V$ do not change in higher dimensions for our protocol, making it fairly easy to adapt.

\section{Conclusion and Future Work} \label{sec:conclusion}
In this paper we studied monotonic searchability in high-dimensional settings and came up with a self-stabilizing protocol \textsc{BuildQuadTree} along with its routing protocol \textsc{SearchQuad}.
We showed that \textsc{BuildQuadTree} along with \textsc{SearchQuad} satisfies monotonic searchability, as well as the even stronger variant of geographic monotonic searchability. 
 
For future work, one may consider the dynamic setting, where nodes are able to join or leave the system.
Our protocol can be easily extended to include nodes that join the system at an old node, meaning that an implicit edge is generated.
We then just let \textsc{BuildQuadTree} transform the system to a legitimate state again.
The more interesting scenario is to think of a protocol that allows nodes to leave the system without violating geometric monotonic searchability.
This is a non-trivial task, as a leaving node potentially destroys search paths for other nodes.

%
%
%
\bibliography{literature}{}
\bibliographystyle{plain}

\end{document}